\newtheorem{thm}{Theorem}[section]
\newtheorem{cor}{Corollary}[section]
\newtheorem{lem}{Lemma}[section]
\newtheorem{defn}{Definition}[section]
\newtheorem{prop}{Proposition}[section]
\newtheorem{exam}{Example}[section]
\def\>{\ensuremath{\rangle}}
\def\<{\ensuremath{\langle}}
\newcommand{\BibTeX}{{\rm B\kern-.05em{\sc i\kern-.025em b}\kern-.08em
    T\kern-.1667em\lower.7ex\hbox{E}\kern-.125emX}}
\title{Model-Checking Linear-Time Properties of Quantum Systems}
\author{MINGSHENG YING, YANGJIA LI, NENGKUN YU, and YUAN FENG\\University of Technology, Sydney and Tsinghua University}
\begin{abstract}
We define a formal framework for reasoning about linear-time properties of quantum systems in which quantum automata are employed in the modeling of systems and certain closed subspaces 
of state (Hilbert) spaces are used as the atomic propositions about the behavior of systems. We provide an algorithm for verifying invariants of quantum automata. Then automata-based model-checking technique is generalized for the verification of safety properties recognizable by reversible automata and $\omega-$properties recognizable by reversible B\"uchi automata.   
\end{abstract}
\keywords{Quantum systems, safety, liveness, invariants, persistence properties, model-checking, quantum automata}
\begin{document}

\setcounter{page}{1}

\begin{bottomstuff}
This work was partly supported by the Australian Research Council (Grant No: 
DP110103473) and 
the National Natural Science
Foundation of China (Grant No: 60736011)
\newline
Authors' address: Center of Quantum Computation and
Intelligent Systems, Faculty of Engineering and Information
Technology, University of Technology, Sydney, City Campus, 15
Broadway, Ultimo, NSW 2007, Australia, and State Key Laboratory of
Intelligent Technology and Systems, Tsinghua National Laboratory for
Information Science and Technology, Department of Computer Science
and Technology, Tsinghua University, Beijing 100084, China, email:
mying@it.uts.edu.au or yingmsh@tsinghua.edu.cn
\end{bottomstuff}

\maketitle

\section{Introduction}

\subsection{Quantum Engineering} 
As pointed out by Dowling and Milburn~\cite{DM03}, we are currently
in the midst of a second quantum revolution: transition from quantum
theory to quantum engineering. The aim of quantum theory is to find
fundamental rules that govern the physical systems already existing
in the nature. Instead, quantum engineering intends to design and
implement new systems (machines, devices, etc) that do not exist
before to accomplish some desirable tasks, based on quantum theory.
Experiences in today's engineering indicate that it is not
guaranteed that a human designer completely understands the
behaviors of the systems she/he designed, and a bug in her/his
design may cause some serious problems and even disasters. So,
correctness, safety and reliability of complex engineering systems
have attracted wide attention and have been systematically studied in
various engineering fields. As is well-known, human intuition is
much better adapted to the classical world than the quantum world.
This implies that human engineers will commit many more faults in
designing and implementing complex quantum systems. Thus,
correctness, safety and reliability problem will be even more
critical in quantum engineering than in today's engineering.

\subsection{Model-Checking} In the last four decades, computer scientists have systematically
developed theories of correctness and safety as well as
methodologies, techniques and even automatic tools for correctness
and safety verification of computer systems; see for
example~\cite{La77}, \cite{MP95}, \cite{AS85}. Model-checking is an effective
automated technique that checks whether a formal (temporal logic)
property is satisfied in a formal model of a system. It has become
one of the dominant techniques for verification of computer systems
nearly 30 years after its inception. Many industrial-strength
systems have been verified by employing model-checking techniques.
Recently, it has also successfully been used in systems biology; see~\cite{HKNPT06} for example. 

\subsection{Model-Checking Quantum Systems}\label{probs}  A question then naturally arises: is it possible and how to
use model-checking techniques to verify correctness and safety of quantum engineering systems?
It seems that the current model-checking
techniques cannot be directly applied to quantum systems because of
some essential differences between the classical world and the quantum world. To develop model-checking techniques for quantum systems, at least the following two problems must be addressed: 
\begin{itemize}\item 
The classical system modeling method cannot be used to describe the behaviors of quantum systems, and the classical specification language is not suited to formalize the properties of quantum systems to be checked.      
So, we need to
carefully and clearly define a conceptual framework in which we can
properly reason about quantum systems, including formal models of
quantum systems and formal description of temporal properties of
quantum systems.\item The state spaces of the classical systems that model-checking techniques can be applied to are usually finite or countably infinite. However, the state spaces of quantum systems are inherently continuous even when they are finite-dimensional. In order to check quantum systems, we have to exploit some deep mathematical properties so that it suffices to examine only a finite number of (or at most countably infinitely many) representative elements, e.g. those in an orthonormal basis, of their state spaces.\end{itemize} 

\subsection{Previous Works}

There have been quite a few papers devoted to model-checking quantum systems. Almost all of the previous works target checking quantum communication protocols. For example, Gay, Nagarajan and Papanikolaou~\cite{GNP06} used the probabilistic model-checker PRISM~\cite{KNP04} to verify the correctness of several quantum protocols including BB84~\cite{BB84}. 
Furthermore, they~\cite{GNP08}, \cite{Pa08} developed an automatic tool QMC (Quantum Model-Checker). QMC uses the stabilizer formalism~\cite{Go97} for the modeling of systems, and the properties to be checked by QMC are expressed in Baltazar, Chadha, Mateus and Sernadas' quantum computation tree logic~\cite{BCMS07}, \cite{BCM08}. But as we shall see below, both the motivations and approaches of the works mentioned are very different from those of this paper.    

There are other two related research lines of verifying the correctness of quantum systems in the previous literature:  (1) quantum process algebras~\cite{GN05}, \cite{JL04}, \cite{FDJY07}, \cite{YFDJ09}, \cite{FDY11}, \cite{YF09}, and (2) quantum simulation, see~\cite{Ll96} for example. They are pursued by computer scientists and physicists, respectively. All works in these two lines have not employed model-checking techniques. 

\subsection{Design Decision of the Paper} Our purpose is to develop model-checking techniques that can be used not only for quantum communication protocols but also for other quantum engineering systems. To this end, first of all, we must address the first problem raised in Subsec.~\ref{probs}. This paper is merely one of the first steps toward such a general purpose. So, we choose to consider a simple formal model as well as a class of simple properties of quantum systems to be checked. More precisely, the major design decision of this paper is as follows:     
\begin{itemize}\item A quantum automaton~\cite{KW97} is adopted as the model of the system. This is obviously reasonable since classical automata (or equivalently transition systems) are the common system models in classical model-checking. 

\item Only linear-time properties of quantum systems are checked in this paper. They are defined to be infinite sequences of sets of atomic propositions, as in the classical case. But atomic propositions about quantum systems are essentially different from those for classical systems. Certain closed subspaces of the state (Hilbert) space of the system are chosen as atomic propositions about the system. The idea of viewing closed subspaces of (equivalently, projections on) a Hilbert space as propositions about a quantum system can be traced back to Birkhoff and von Neumann~\cite{BvN36}, and has been widely accepted in the quantum logic community for more than 70 years. 
\end{itemize}

\subsection{Contribution of the paper} Overall, automata-based model-checking techniques~\cite{VW94}, \cite{KV01} are generalized into the quantum setting. The key idea of the automata-based approach to model-checking is that we can use an auxiliary automaton to recognize the properties to be checked, and then it is combined with the system under checking so that the problem of checking the safety or $\omega-$properties of the system is reduced to checking some simpler (invariant or persistence) properties of the larger system composed by the system under checking and the auxiliary automaton. A difference between the classical case and the quantum case deserves a careful explanation. In the classical case, the auxiliary automaton can be any finite state automaton, whereas in the quantum case, such an auxiliary automaton is required to be reversible; otherwise it cannot be a part of a quantum system because the dynamics of a quantum system is inherently reversible. Since some regular and $\omega-$regular languages cannot be recognized by reversible automata~\cite{P87}, \cite{P01}, the class of properties that can be checked by the techniques developed in this paper is a proper subclass of that by classical model-checking techniques (if we ignore the difference between classical and quantum atomic propositions).    

The major technical contribution of this paper is a solution to the second problem raised in Subsec.~\ref{probs}. This solution consists of the following steps:
\begin{enumerate} \item Under an assumption about commutativity of atomic propositions, we show that to check an invariant of a quantum automaton, it suffices to examine its behaviors starting in an orthonormal basis of the space of its initial states. Thus, an algorithm for checking invariants of quantum automata can be developed since there are only a finite number of elements in a basis of a finite-dimensional state space.
\item Under the same assumption, it is shown that a quantum automaton satisfies a persistence property if and only if it satisfies a corresponding invariant. This is very different from the classical case, and at the first glance it is quite strange. However, such an equivalence between invariants and persistence properties is reasonable because the operations of quantum automata are always reversible. \item We show that the reduction from safety and $\omega-$properties of the system under checking to invariants and persistence properties of the composed system stated above is feasible if the composed system always starts in an orthonormal basis of the space of its initial states. 
\item Fortunately, we can choose a set of atomic propositions about the composed system that enjoys the required commutativity. This enables us to connect 1), 2) and 3) seamlessly. It is worth noting that one of the main technical difficulties in quantum model-checking is to find a way in which such a connection is effective. Indeed, this connection heavily depends on some profound properties of Hilbert spaces, e.g. implication from commutativity to distributivity in the lattice of closed subspaces of a Hilbert space. However, this connection works automatically and so was not a problem at all in the classical case. 
 \end{enumerate}

\subsection{Organization of the Paper} In Sec.~\ref{frame}, we recall some basic notions from quantum theory as well as the definition of quantum automata from~\cite{KW97} for convenience of the reader. 
In Sec.~\ref{ltp}, a language for specifying linear-time properties of quantum systems is defined. Several important classes of linear-time properties of quantum systems are examined, including safety, liveness, invariant and persistence properties. An algorithm for checking invariants of a quantum automaton is presented in Sec.~\ref{algo}. The techniques for model-checking safety properties and $\omega-$properties of quantum systems are presented in 
Sec.~\ref{reduce} and~\ref{reduce1}, respectively. A brief conclusion is drawn and some problems for future studies are pointed out in Sec.~\ref{concl} 

\section{Quantum Systems and Their Behaviors}\label{frame} 

\subsection{Hilbert Spaces}

According to a basic postulate of quantum mechanics, the state space
of an isolated quantum system is a Hilbert space. In this paper, we
only consider finite or countably infinite-dimensional Hilbert
spaces. For convenience of the reader, we briefly recall some basic
notions from Hilbert space theory. We write $\mathbf{C}$ for the set
of complex numbers. For each complex number $c\in \mathbf{C}$,
$\overline{c}$ stands for the conjugate of $c$. An inner product
over a complex vector space $H$ is a mapping
$\langle\cdot|\cdot\rangle: H\times H\rightarrow \mathbf{C}$
satisfying the following properties: \begin{enumerate}\item 
$\langle\varphi|\varphi\rangle\geq 0$ with equality if and only if
$|\varphi\rangle =0$; \item 
$\langle\varphi|\psi\rangle=\overline{\langle\psi|\varphi\rangle}$;
and \item $\langle\varphi|c_1\psi_1+c_2\psi_2\rangle=
c_1\langle\varphi|\psi_1\rangle+c_2\langle\varphi|\psi_2\rangle$ \end{enumerate} for
any $|\varphi\rangle, |\psi\rangle, |\psi_1\rangle, |\psi_2\rangle
\in H$ and for any $c_1,c_2\in \mathbf{C}$. 
Sometimes, we write $(|\varphi\rangle,|\psi\rangle)$ for the inner product $\langle\varphi|\psi\rangle$.
Two vectors
$|\varphi\rangle,|\psi\rangle$ in $H$ are said to be orthogonal and
we write $|\varphi\rangle\perp|\psi\rangle$ if
$\langle\varphi|\psi\rangle=0$.
For any vector $|\psi\rangle$ in
$H$, its length $||\psi||$ is defined to be
$\sqrt{\langle\psi|\psi\rangle}$. If $||\psi||=1$, then
$|\psi\rangle$ is called a unit vector. 

Let $H$ be an inner product
space, $\{|\psi_n\rangle\}$ a sequence of vectors in $H$, and
$|\psi\rangle\in H$. If for any $\epsilon
>0$, there exists a positive integer $N$ such that $||\psi_m
-\psi_n||<\epsilon$ for all $m,n\geq N$, then $\{|\psi_n\rangle\}$
is called a Cauchy sequence. If for any $\epsilon
>0$, there exists a positive integer $N$ such that $||\psi_n
-\psi||<\epsilon$ for all $n\geq N$, then $|\psi\rangle$ is called a
limit of $\{|\psi_n\rangle\}$ and we write $|\psi\rangle
=\lim_{n\rightarrow\infty}|\psi_n\rangle.$ A Hilbert space is a
complete inner product space; that is, an inner product space in
which each Cauchy sequence of vectors has a limit. A state of a
quantum system is usually described by a unit vector in a Hilbert
space. 

A sequence $\{|\psi_n\rangle\}$ of vectors in $H$
is summable with the sum $|\psi\rangle$ and we write
$|\psi\rangle=\sum_{n}|\psi_n\rangle$ if for any $\epsilon>0$ there
is nonnegative integer $n_0$ such that $$||\psi-\sum_{m\leq
n}\psi_m ||<\epsilon$$ for every $n\geq n_0$. A finite or countably
infinite family $\{|\psi_n\rangle\}$ of unit vectors is called an
orthonormal basis of $H$ if \begin{enumerate}\item $|\psi_m\rangle\perp |\psi_n\rangle$
for any $m,n$ with $m\neq n$; and \item 
$$|\psi\rangle=\sum_{n}\langle\psi_n|\psi\rangle|\psi_n\rangle$$ for
each $|\psi\rangle\in H.$\end{enumerate} 

Let $X\subseteq H$. If we have
$|\varphi\rangle +|\psi\rangle\in X$ and $c|\varphi\rangle \in X$
for any $|\varphi\rangle, |\psi\rangle\in X$ and $c\in \mathbf{C}$,
then $X$ is called a subspace of $H$. For each $X\subseteq H$, the
closure $\overline{X}$ of $X$ is defined to be the set of limits
$\lim_{n\rightarrow\infty}|\psi_n\rangle$ of sequences
$\{|\psi_n\rangle\}$ in $X$. A subspace $X$ of a Hilbert space $H$
is said to be closed if $\overline{X}= X$.  For any subset $X$ of $H$, we define $span X$ to be the smallest closed subspace of $H$.
Let $X$ be a closed
subspace of $H$ and $|\psi\rangle\in H$. 
Then we write $|\psi\rangle\perp X$ whenever $|\psi\rangle\perp |\varphi\rangle$ for all $|\varphi\rangle\in X$. The ortho-complementation of $X$ is defined to be $$X^{\bot}=\{|\varphi\rangle\in H||\varphi\rangle\perp X\}$$ For each $|\psi\rangle\in H$, there exist uniquely
$|\psi_0\rangle\in X$ and $|\psi_1\rangle\in
X^{\bot}$ such that
$|\psi\rangle=|\psi_0\rangle+|\psi_1\rangle$. The vector
$|\psi_0\rangle$ is called the projection of $|\psi\rangle$ onto $X$
and written $|\psi_0\rangle=P_X|\psi\rangle$. Thus, an operator
$P_X$ on $H$ is defined and it is called the projector onto $X$.

A (linear) operator on a Hilbert space $H$ is a mapping
$A: H\rightarrow H$ satisfying the following
conditions: \begin{enumerate}
\item $A(|\varphi\rangle+|\psi\rangle)=A|\varphi\rangle+A|\psi\rangle$;
\item $A(\lambda |\psi\rangle)=\lambda A|\psi\rangle$\end{enumerate} for all
$|\varphi\rangle,|\psi\rangle\in H$ and $\lambda\in\mathbf{C}$. The identity operator on $H$ is written as $I_H$. 
For any subset $X$ of $H$ and operator $A$ on $H$, the image of $X$ under $A$ is denoted by $$AX=\{A|\psi\rangle||\psi\rangle\in X\}.$$
For any operator $A$ on $H$, if there exists a linear
operator $A^{\dag}$ on $H$ such that $$(|\varphi\rangle,
A|\psi\rangle)=(A^{\dag}|\varphi\rangle,|\psi\rangle)$$ for all
$|\varphi\rangle, |\psi\rangle\in H$, then $A^{\dag}$ is
called the adjoint of $A$. An eigenvector of an operator $A$ on $H$ is a non-zero vector $|\psi\rangle\in H$ such that $A|\psi\rangle=\lambda |\psi\rangle$ for some $\lambda\in\mathbf{C}$, called the eigenvalue of $A$ corresponding to $|\psi\rangle$.

The state space of a composed quantum system is the tensor product of the state spaces of its component systems. Let $H_k$ be a Hilbert space with orthonormal basis $\{|\varphi_{i_k}\}$ for $1\leq k\leq n$. 
Then the tensor product $\bigotimes_{k=1}^{n}H_k$ is defined to be the Hilbert space with $\{|\varphi_{i_1}\rangle...|\varphi_{i_n}\rangle\}$ as its orthonormal basis. If $A_k$ is a linear operator 
on $H_k$ for $1\leq k\leq n$, then the tensor product $\bigotimes_{k=1}^{n}A_k$ is the operator on $\bigotimes_{k=1}^{n}H_k$ defined by $$\bigotimes_{k=1}^{n}A_k(|\psi_1\rangle...
|\psi_n\rangle)=(A_1|\psi_1\rangle)...(A_n|\psi_n\rangle)$$ for all $|\psi_k\rangle\in H_k$ $(1\leq k\leq n)$.

\subsection{Dynamics of Quantum Systems}
An operator $U$ on a Hilbert space $H$ is called a unitary transformation
if $U^{\dag}U=I_H$.
The basic postulate of quantum
mechanics about evolution of systems may be stated as follows:
Suppose that the states of a closed quantum system at times $t_0$
and $t$ are $|\psi_0\rangle$ and $|\psi\rangle$, respectively. Then
they are related to each other by a unitary operator $U$ which
depends only on the times $t_0$ and $t$:
$$|\psi\rangle=U|\psi_0\rangle.$$

\subsection{Quantum Automata}

As the first step toward to developing model-checking techniques for
quantum systems, we choose to consider a class of simple quantum
systems whose discrete-time behaviors can be modeled by quantum
automata~\cite{KW97}.

\begin{defn}Let $H$ be a Hilbert space with orthonormal basis $\{|i\rangle\}$. A quantum automaton in $H$ is a triple
$$\mathbb{A}=(Act,\{U_\alpha|\alpha\in Act\}, I)$$
where\begin{enumerate}\item $Act$ is a set of action names;
\item for each $\alpha\in Act$, $U_\alpha$ is a unitary operator on $H$, that is, it is a linear operator, written as $$U_\alpha|i\rangle=\sum_jU_\alpha(i,j)|j\rangle,$$ such that \begin{equation*}
\sum_j\overline{U_\alpha(j, i_1)}U_\alpha(i_2,j)=\begin{cases}1 &{\rm if}\ i_1=i_2,\\ 0 &{\rm otherwise;}
\end{cases}
\end{equation*} \item $I$ is a closed subspace of $H$, the space of initial
states.\end{enumerate}\end{defn}

A path of $\mathbb{A}$ is an infinite sequence
$|\psi_0\rangle|\psi_1\rangle|\psi_2\rangle...$ of states in $H$
such that $|\psi_0\rangle\in I$, and
$$|\psi_{n+1}\rangle=U_{\alpha_n}|\psi_n\rangle$$ for some $\alpha_n\in
Act$, for all $n\geq 0$. This means that a path starts in an initial
state $|\psi_0\rangle$, and for each $n\geq 0$, at the beginning of
the $n$th step the machine is in state $|\psi_n\rangle$, then it
performs an action described by $U_{\alpha_n}$ and evolves into state
$|\psi_{n+1}\rangle$. Likewise, a path fragment of $\mathbb{A}$ is a
finite sequence $|\psi_0\rangle|\psi_1\rangle...|\psi_n\rangle$ such
that $|\psi_0\rangle\in I$ and
$$|\psi_{k+1}\rangle=U_{\alpha_k}\psi_k\rangle$$ for some $\alpha_k\in
Act$, $k=0,1,...,n-1$. Let $|\psi\rangle\in I$ and let
$|\varphi\rangle$ be a state in $H$. We say that $|\varphi\rangle$ is
reachable from $|\psi\rangle$ in $\mathbb{A}$ if $\mathbb{A}$ has a
path fragment $|\psi_0\rangle|\psi_1\rangle...|\psi_n\rangle$ such
that $|\psi_0\rangle=|\psi\rangle$ and
$|\psi_n\rangle=|\varphi\rangle$.
We put $$R(\mathbb{A})=\{|\psi\rangle||\psi\rangle\ {\rm\ is\ reachable\ from\ some}\ |\phi\rangle\in I\}.$$ and define $RS(\mathbb{A})$ to be the closed subspace generated by $R(\mathbb{A})$, that is, $RS(\mathbb{A})=spanR(\mathbb{A})$. The following lemma gives a simple characterization of $RS(\mathbb{A})$. 

\begin{lem}\label{reachable}
$RS(\mathbb{A})$ is the intersection of all closed subspaces $X$ of $H$ satisfying the following conditions:\begin{enumerate}\item $I\subseteq X$; \item $U_{\alpha}X\subseteq X$ for all $\alpha\in Act$.\end{enumerate}
In other words, $RS(\mathbb{A})$ is the smallest one among all of these $X$.
\end{lem}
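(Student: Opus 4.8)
The plan is to characterize $RS(\mathbb{A})$ as the smallest closed subspace satisfying conditions (1) and (2) by establishing two inclusions separately: first that $RS(\mathbb{A})$ is itself such a subspace, and second that it is contained in every such subspace. Write $Y$ for the intersection of all closed subspaces $X$ obeying (1) and (2). If $RS(\mathbb{A})$ satisfies (1) and (2), then it is one of the sets over which the intersection is taken, so $Y\subseteq RS(\mathbb{A})$; conversely, if $RS(\mathbb{A})\subseteq X$ for every such $X$, then $RS(\mathbb{A})\subseteq Y$. Together these give $RS(\mathbb{A})=Y$, which is exactly the claim (including the ``smallest'' reformulation).

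For the first inclusion I would verify that $RS(\mathbb{A})$ obeys (1) and (2). Condition (1) is immediate: every $|\phi\rangle\in I$ is reachable from itself via the trivial, length-zero path fragment $|\phi\rangle$ (for which there is no recurrence to check), so $I\subseteq R(\mathbb{A})\subseteq RS(\mathbb{A})$. For condition (2) I would first record the set-level invariance $U_\alpha R(\mathbb{A})\subseteq R(\mathbb{A})$: if $|\psi\rangle$ is reached from some $|\phi\rangle\in I$ by a path fragment ending in $|\psi\rangle$, then appending one more step produces a path fragment ending in $U_\alpha|\psi\rangle$, so $U_\alpha|\psi\rangle$ is again reachable.

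The main obstacle is to lift this set-level invariance to the closed span $RS(\mathbb{A})=\mathrm{span}\,R(\mathbb{A})$. Writing $L$ for the algebraic linear span of $R(\mathbb{A})$, so that $RS(\mathbb{A})=\overline{L}$, linearity of $U_\alpha$ together with $U_\alpha R(\mathbb{A})\subseteq R(\mathbb{A})$ gives $U_\alpha L\subseteq L$. To pass from $L$ to its closure I must use that $U_\alpha$ is continuous; this follows because $U_\alpha^{\dag}U_\alpha=I_H$ makes it an isometry, $\|U_\alpha|\psi\rangle\|=\||\psi\rangle\|$, and a continuous linear map sends the closure of a set into the closure of its image. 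This is the one place where finite-dimensionality cannot be silently assumed, since the excerpt admits countably infinite-dimensional $H$. Concretely, $U_\alpha(RS(\mathbb{A}))=U_\alpha(\overline{L})\subseteq\overline{U_\alpha L}\subseteq\overline{L}=RS(\mathbb{A})$, which establishes (2).

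For the second inclusion, let $X$ be any closed subspace satisfying (1) and (2). A straightforward induction along an arbitrary path fragment $|\psi_0\rangle|\psi_1\rangle\ldots|\psi_n\rangle$ shows $R(\mathbb{A})\subseteq X$: the initial state lies in $I\subseteq X$ by (1), and if $|\psi_k\rangle\in X$ then $|\psi_{k+1}\rangle=U_{\alpha_k}|\psi_k\rangle\in U_{\alpha_k}X\subseteq X$ by (2). Since $X$ is a closed subspace containing $R(\mathbb{A})$ and $RS(\mathbb{A})$ is by definition the smallest such subspace, $RS(\mathbb{A})\subseteq X$. As $X$ was arbitrary, $RS(\mathbb{A})\subseteq Y$, and combining with the first inclusion completes the proof.
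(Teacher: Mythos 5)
Your proof is correct, and since the paper dismisses this lemma with the single word ``Straightforward,'' your argument supplies exactly the standard reasoning the authors intended: verify that $RS(\mathbb{A})$ itself satisfies conditions (1) and (2), show by induction along path fragments that every admissible $X$ contains $R(\mathbb{A})$ and hence $RS(\mathbb{A})$, and combine the two inclusions. The one point that genuinely requires care---passing from $U_\alpha R(\mathbb{A})\subseteq R(\mathbb{A})$ to invariance of the \emph{closed} span when $H$ may be infinite-dimensional---is handled properly by your appeal to the continuity of the unitary (isometric) operator $U_\alpha$.
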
 

\begin{proof} Straightforward.\end{proof}

\section{Linear-Time Properties of Quantum Systems}\label{ltp}

\subsection{Atomic Propositions in Quantum Systems}

Let $H$ be the state space of a quantum system. A closed subspace of
$H$ will be seen as an atomic proposition about this system; more
precisely, we will mainly consider the basic properties of the
system of the form: $|\psi\rangle\in X$, where $X$ is a closed
subspace of $H$, and $|\psi\rangle$ is a state of the system. So,
for a closed subspace $X$ of $H$, atomic proposition represented by
$X$ specifies a constraint on the behavior of the system under
consideration that its state is within the given region $X$. This
viewpoint of atomic propositions about a quantum system was proposed
by Birkhoff and von Neumann a long time ago, and it is exactly the starting point of their quantum logic~\cite{BvN36}. It was also adopted in one of the authors' studies on predicate 
transformer semantics~\cite{YDFJ10} and automata theory based on quantum logic~\cite{Y07}  

We write
$S(H)$ for the set of closed subspaces of $H$. Some basic (atomic)
propositions are of interest, but others may be irrelevant in a
special situation. So, we choose $AP\subseteq S(H)$. Intuitively, the
elements of $AP$ represents the atomic propositions of interest. For
each $|\psi\rangle\in H$, we write $L(|\psi\rangle)$ for the set of
atomic propositions satisfied in state $|\psi\rangle$; that is,
$$L(|\psi\rangle)=\{X\in AP||\psi\rangle\in X\}.$$

\begin{defn}Let $X\in S(H)$. Then we say that state $|\psi\rangle$ satisfies
$X$, written $|\psi\rangle\models X$, if $$\bigcap_{Y\in
L(|\psi\rangle)}Y\subseteq X.$$\end{defn}

Note that in the above definition $X$ is allowed to be not in $AP$.
The intuitive meaning of the inclusion in the above definition is
that the atomic propositions that hold in state $|\psi\rangle$ imply
collectively proposition $X$.

The following simple example provides a clear illustration of the above definition. 

\begin{exam}Let $H$ be an $n-$dimensional Hilbert space with orthonormal basis $\{|0\rangle,|1\rangle,...,|n-1\rangle\}$ $(n\geq 2)$, and let $|\psi\rangle=\frac{1}{\sqrt{2}}(|0\rangle+|1\rangle)$.\begin{enumerate}
\item If we take $AP=\{Y\in S(H)||0\rangle\perp Y\}$, then $L(|\psi\rangle)=\emptyset$ and $$\bigcap_{Y\in L(|\psi\rangle)}Y=H.$$ Thus, for any $X\in S(H)$, $|\psi\rangle\models X$ if and only if $X=H$.
\item Let $AP=\{2-{\rm dimensional\ subspaces\ of}\ H\}$. For the case of $n=2$, we have $$\bigcap_{Y\in L(|\psi\rangle)}Y=H,$$ and $|\psi\rangle\models X$ if and only if $X=H$. For the case of $n>2$,  
$$\bigcap_{Y\in L(|\psi\rangle)}Y=span\{|\psi\rangle\},$$ and $|\psi\rangle\models X$ if and only if $|\psi\rangle\in X$.
\item If $AP=\{X\in S(H)||2\rangle\in X\}$, then $$\bigcap_{Y\in L(|\psi\rangle)}Y=span\{|\psi\rangle, |2\rangle\},$$ and $|\psi\rangle\models X$ if and only if $|\psi\rangle, |2\rangle\in X$.
\end{enumerate}
\end{exam} 

We now present a technical lemma which will be frequently used in what follows. 
Recall that for
a finite family $\{X_i\}$ of closed subspaces of $H$, we define the
join of $\{X_i\}$ by $$\bigvee_{i}X_i=span (\bigcup_i X_i).$$ In
particular, we write $X\vee Y$ for the join of two closed subspaces
$X$ and $Y$ of $H$.

\begin{lem}\label{commut1} Suppose that $AP$ satisfies the
following two conditions:\begin{enumerate}\item Any two elements
$Z_1,Z_2$ of $AP$ commute; that is $P_{Z_1}P_{Z_2}=P_{Z_1}P_{Z_2}$, where $P_{Z_1},P_{Z_2}$ are
projections onto $Z_1$ and $Z_2$, respectively, and
\item $AP$ is closed under join: if $Z_1, Z_2\in AP$, then $Z_1\vee Z_2\in AP$.\end{enumerate}
Let $Y$ be a closed subspace of $H$ with $\{|\psi_i\rangle\}$ as its basis. Then the following two statements are equivalent:\begin{enumerate}\item $|\xi\rangle\models X$ for all $|\xi\rangle\in Y$; \item 
$|\psi_i\rangle\models X$ for all $i$.\end{enumerate}\end{lem}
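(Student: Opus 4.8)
The plan is to work with the operator $M(|\psi\rangle)=\bigcap_{Z\in L(|\psi\rangle)}Z$, so that by the definition of $\models$ the assertion $|\psi\rangle\models X$ is exactly $M(|\psi\rangle)\subseteq X$. The implication $(1)\Rightarrow(2)$ is immediate, since every basis vector $|\psi_i\rangle$ lies in $Y$. For $(2)\Rightarrow(1)$ I would fix an arbitrary $|\xi\rangle\in Y$ and reduce the whole lemma to the single inclusion
$$M(|\xi\rangle)\subseteq\bigvee_i M(|\psi_i\rangle).$$
Granting this, the conclusion follows cleanly: hypothesis (2) gives $M(|\psi_i\rangle)\subseteq X$ for every $i$, and since $X$ is a closed subspace containing each $M(|\psi_i\rangle)$ it contains their join; hence $M(|\xi\rangle)\subseteq\bigvee_i M(|\psi_i\rangle)\subseteq X$, that is $|\xi\rangle\models X$.

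To establish the key inclusion I would exploit the two hypotheses on $AP$ through a common decomposition of $H$. Because the projections $\{P_Z:Z\in AP\}$ pairwise commute, they can be simultaneously diagonalized, yielding an orthogonal family of ``cells'' $H=\bigoplus_k C_k$ such that every $Z\in AP$ is a sum $Z=\bigoplus_{k\in S_Z}C_k$ of cells; this is the concrete form of the passage from commutativity to distributivity mentioned in the introduction. Writing $|\psi\rangle=\sum_k|\psi^{(k)}\rangle$ with $|\psi^{(k)}\rangle\in C_k$ and $\mathrm{supp}(|\psi\rangle)=\{k:|\psi^{(k)}\rangle\neq 0\}$, membership $|\psi\rangle\in Z$ becomes the combinatorial condition $\mathrm{supp}(|\psi\rangle)\subseteq S_Z$. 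Consequently $M(|\psi\rangle)$ is the sum of cells over the index set $T(\mathrm{supp}(|\psi\rangle))$, where $T(S)=\bigcap\{S_Z:S\subseteq S_Z\}$ is a (monotone) closure operator on cell-index sets, and the second hypothesis, closure under join, translates into $S_{Z_1\vee Z_2}=S_{Z_1}\cup S_{Z_2}$, i.e. the family $\{S_Z\}$ is closed under union.

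The heart of the argument, and the step I expect to be the main obstacle, is the subadditivity of $T$, namely $T\big(\bigcup_i S_i\big)\subseteq\bigcup_i T(S_i)$. This is exactly where union-closure is indispensable: if a cell $c$ lay in the left-hand side but outside every $T(S_i)$, one could choose for each $i$ a witness $S_{Z_i}\supseteq S_i$ with $c\notin S_{Z_i}$, and then $\bigcup_i S_{Z_i}$ is again of the form $S_Z$, contains $\bigcup_i S_i$, yet omits $c$, contradicting $c\in T(\bigcup_i S_i)$. The hypothesis is genuinely needed: dropping join-closure, one checks on $\mathbf{C}^3$ with $AP=\{\mathrm{span}\,|1\rangle,\ \mathrm{span}\,|2\rangle\}$ and $|\xi\rangle=|1\rangle+|2\rangle$ that the equivalence already fails. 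Since each cell-component of $|\xi\rangle=\sum_i c_i|\psi_i\rangle$ is a combination of the corresponding components of the $|\psi_i\rangle$, we get $\mathrm{supp}(|\xi\rangle)\subseteq\bigcup_i\mathrm{supp}(|\psi_i\rangle)$; applying monotonicity and then subadditivity of $T$ yields $T(\mathrm{supp}(|\xi\rangle))\subseteq\bigcup_i T(\mathrm{supp}(|\psi_i\rangle))$, which, read back in terms of subspaces, is precisely the key inclusion $M(|\xi\rangle)\subseteq\bigvee_i M(|\psi_i\rangle)$.

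The remaining care is technical. Setting up the cell decomposition rigorously for an arbitrary commuting family, and the countably-infinite-dimensional case --- where $|\xi\rangle$ is only a limit of finite combinations, where the joins must be understood as closed subspaces, and where union-closure of $\{S_Z\}$ must be handled for the possibly infinite unions that arise --- are the places that require continuity and closedness arguments. In finite dimension only finitely many cells are relevant and everything collapses to the clean combinatorial statement above, so I would first present the finite-dimensional core and then indicate the limiting argument for the general case.
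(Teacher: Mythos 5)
Your finite-dimensional argument is correct, but it takes a genuinely different route from the paper's. The paper never constructs a common decomposition: it invokes, as a black box, the fact that pairwise commutativity makes distributivity valid among $AP$ (Proposition 2.5 in~\cite{BH00}), rewrites $\bigvee_{i}\bigcap_{Z\in L(|\psi_i\rangle)}Z$ as $\bigcap_{\mathcal{Z}\in\prod_i L(|\psi_i\rangle)}\bigvee_i\mathcal{Z}(i)$, and then shows directly that each $\bigvee_i\mathcal{Z}(i)$ belongs to $L(|\xi\rangle)$: it contains every $|\psi_i\rangle$, hence contains $|\xi\rangle=\sum_i a_i|\psi_i\rangle$, and it lies in $AP$ by join-closure. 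Your simultaneous diagonalization into cells is a concrete substitute for the cited distributivity, and your subadditivity argument for $T$ (for each $i$ choose a witness $S_{Z_i}\supseteq S_i$ omitting $c$, then take the union) is exactly the paper's choice-function step in combinatorial clothing. What your version buys is self-containedness --- no appeal to orthomodular-lattice theory --- and a transparent picture of where each hypothesis acts (your $\mathbf{C}^3$ example for the necessity of join-closure is a nice touch); what the paper's version buys is independence from any atomic decomposition, which is decisive for the next point.

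The genuine gap is the infinite-dimensional case, and it is not the kind of gap you describe: it is not a matter of ``continuity and closedness arguments'' layered on the finite-dimensional cell picture, because the cell decomposition itself can fail to exist. Take $H\cong L^2[0,1]$ (a legitimate countably-infinite-dimensional space in the paper's setting) and $AP=\{L^2(E): E\subseteq[0,1]\ \text{Borel}\}$: the associated projections (multiplication by $\chi_E$) pairwise commute, and $AP$ is closed under join since $L^2(E_1)\vee L^2(E_2)=L^2(E_1\cup E_2)$. If $H=\bigoplus_k C_k$ were a cell decomposition with every $Z\in AP$ a sum of cells, then any nonzero $f$ in a cell would have to satisfy, for each Borel $E$, either $f\in L^2(E)$ or $f\in L^2(E)^{\perp}=L^2([0,1]\setminus E)$; but choosing $E$ so that $\int_E|f|^2=\frac{1}{2}\int|f|^2$ violates both. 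Hence all cells are zero and no such decomposition exists, even though the commutativity and join-closure hypotheses hold. An honest extension of your approach therefore needs the spectral representation of the abelian von Neumann algebra generated by $\{P_Z\}$ (cells replaced by measurable sets modulo null sets, essential infima for the possibly uncountable intersection defining $M$), which is substantially heavier machinery; the alternative is to fall back on the paper's lattice-theoretic route. In fairness, the paper's own proof also quietly assumes $|\xi\rangle=\sum_{i\in I}a_i|\psi_i\rangle$ with $I$ finite (treating $\{|\psi_i\rangle\}$ as an algebraic basis), and binary join-closure would not supply the infinite joins needed if $|\xi\rangle$ were only a limit of such combinations; so your finite-dimensional core matches the paper's actual level of rigor, but your closing claim that the general case follows by a routine limiting argument should be withdrawn.
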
 

\begin{proof} It is obvious that 1) implies 2). Now we show that 2) implies 1). For any $|\xi\rangle\in Y$, we can write $$|\xi\rangle=\sum_{i\in I}a_i|\psi_i\rangle$$ for a finite index set $I$ and for some complex numbers $a_i$ $(i\in I)$ because $\{|\psi_i\rangle\}$ is a basis of $Y$. By the assumption that $|\psi_i\rangle\models X$, we obtain: $$\bigcap_{Z\in L(|\psi_i\rangle)}Z\subseteq X$$ for all $i\in I$. Therefore, it follows that $$\bigvee_{i\in I}\bigcap_{Z\in L(|\psi_i\rangle)}Z\subseteq X.$$ Since any two elements of $AP$ commute, distributivity is valid among $AP$ (see Proposition 2.5 in~\cite{BH00}), and we have: 
$$\bigvee_{i\in I}\bigcap_{Z\in L(|\psi_i\rangle)}Z=\bigcap_{\mathcal{Z}\in \prod_{i\in I}L(|\psi_i\rangle)}\bigvee_{i\in I}\mathcal{Z}(i).$$ Therefore, we only need to show that \begin{equation}\label{distri}
\bigcap_{Z\in L(|\psi\rangle)}Z\subseteq\bigcap_{\mathcal{Z}\in \prod_{i\in I}L(|\psi_i\rangle)}\bigvee_{i\in I}\mathcal{Z}(i).\end{equation} In fact, for any $$\mathcal{Z}\in \prod_{i\in I}L(|\psi_i\rangle),$$ by definition it holds that $|\psi_i\rangle\in\mathcal{Z}(i)$ for all $i\in I$. Then $$|\xi\rangle=\sum_{i\in I}a_i|\psi_i\rangle\in\bigvee_{i\in I}\mathcal{Z}(i).$$ In addition, it is assumed that $AP$ is closed under join. This implies $$\bigvee_{i\in I}\mathcal{Z}(i)\in L(|\psi\rangle)$$ and $$\bigcap_{Z\in L(|\psi\rangle)}Z\subseteq\bigvee_{i\in I}\mathcal{Z}(i).$$ So, Eq.~(\ref{distri}) is correct, and we complete the proof.  
\end{proof} 

\subsection{Linear-Time Properties and Satisfaction}

Now the set $AP$ of atomic propositions is fixed and we are going to define linear-time properties over $AP$.
We write $$(2^{AP})^{\ast}=\bigcup_{n=0}^{\infty}(2^{AP})^{n}$$ for
the set of finite sequences of subsets of $AP$ and
$(2^{AP})^{\omega}$ for the set of infinite sequences of subsets of
$AP$, where $\omega=\{0,1,2,...\}$ is the set of natural numbers. It what follows, we will use elements of $(2^{AP})^{\omega}$ (or $(2^{AP})^{\ast}$) to represent the behavior of a 
quantum system. This design decision deserves a careful explanation. Let $$\sigma=A_0A_1A_2...\in (2^{AP})^{\omega}\ ({\rm or}\ (2^{AP})^{\ast}).$$ Each element $A_n$ $(n\geq 0)$ is a closed subspace of 
the state space $H$ of a quantum system. So, it can be seen as a quantum object. However, if we do not care elements of $A_n$ $(n\geq 0)$ and focus our attention on $\sigma$ itself, then $\sigma$ is 
a classical object. Here, we can imagine that two levels exist in $\sigma$: object logical level and meta-logical level. The object logical level is the objects under consideration, so it belongs to the quantum world.
On the other hand, the meta-logical level is the way in which we (human beings) reason about the quantum world, so it is reasonably defined to be a classical object. In the sequel, we will see that the study of the behavior of a quantum system at the meta-logical level is similar to the classical case, but the study at the object logical level is very different because some essential differences between the quantum world and 
the classical world will irreversibly appear.    

For a path $\pi=|\psi_0\rangle |\psi_1\rangle|\psi_2\rangle...$ in a quantum automaton 
$\mathbb{A}$, we write
$$L(\pi)=L(|\psi_0\rangle)L(|\psi_1\rangle)L(|\psi_2\rangle)...\in
(2^{AP})^{\omega}.$$ Similarly, if
$\widehat{\pi}=|\psi_0\rangle|\psi_1\rangle...|\psi_n\rangle$ is a
path fragment in $\mathbb{A}$, then we write
$$L(\widehat{\pi})=L(|\psi_0\rangle)L(|\psi_1\rangle)...L(|\psi_n\rangle).$$

\begin{defn}The set of traces and the set of finite traces of a quantum
automaton $\mathbb{A}$ are defined as follows:
\begin{equation*}\begin{split}Traces(\mathbb{A})& =\{L(\pi)|\pi\ {\rm is\ a\ path\ in}\
\mathbb{A}\},\\
Traces_{fin}(\mathbb{A})&=\{L(\widehat{\pi})|\widehat{\pi}\ {\rm is\
a\ path\ fragment \ in}\
\mathbb{A}\}.\end{split}\end{equation*}
\end{defn}

Obviously, $Traces(\cdot)$ and $Traces_{fin}(\cdot)$
describes the infinite and finite behaviors, respectively, of quantum automatons.  
Note that what concerns us in this paper are only
linear-time behaviors of quantum systems since the behaviors of a
system is depicted in terms of sequences. In the future studies we
will also consider branching-time behaviors represented by trees
instead of sequences. But the branching-time behavior of a quantum system is much more complicated than its classical counterpart due to the superposition posibility of quantum states. 

A (linear-time) property of a quantum automaton $\mathbb{A}$ in
Hilbert space $H$ is then defined to be a subset $P$ of
$(2^{AP})^{\omega}$; in other words, an element of $P$ is an
infinite sequence $A_0A_1A_2...$ such that $A_n$ is a subset of $AP$
for all $n\geq 0$. A property $P$ specifies the admissible behaviors
of machine $\mathbb{A}$: if $A_0A_1A_2...\in P$, then a path
$\pi=|\psi_0\rangle|\psi_1\rangle|\psi_2\rangle...$ of $\mathbb{A}$
is admissible whenever $|\psi_n\rangle$ satisfies all the atomic
propositions in $A_n$ for all $n\geq 0$; otherwise the path $\pi$ is
prohibited by $P$.

Now we are ready to define the key notion of satisfaction of a property by a quantum system.

\begin{defn}We say that a quantum automaton $\mathbb{A}$ satisfies a
linear-time property $P$, written $\mathbb{A}\models P$, if
$Traces(\mathbb{A})\subseteq P$.\end{defn}

\subsection{Safety Properties}

In the remainder of this section, we consider several special classes of linear-time properties. Safety is one of the most important kinds of linear-time properties.  
A safety property specifies that \textquotedblleft something bad never
happens"~\cite{La77}. An elegant definition of safety property was introduced by Alpern and
Schneider~\cite{AS85} 
based on the intuition that a \textquotedblleft bad event" for a
safety property occurs in a finite amount of time, if it occurs at
all. Their definition can be naturally generalized to the quantum case by simply replacing atomic propositions about a classical system with closed subspaces of a Hilbert space. 
Formally, a finite sequence $\widehat{\sigma}\in
(2^{AP})^{\ast}$ is called a bad prefix of a property $P$ if
$\widehat{\sigma}\sigma\not\in P$ for all $\sigma\in
(2^{AP})^{\omega}$. We write $BPref(P)$ for the set of bad prefixes
of $P$. Let $\widehat{\sigma}\in (2^{AP})^{\ast}$ and $\sigma\in
(2^{AP})^{\omega}$. If $\sigma=\widehat{\sigma}\sigma^{\prime}$ for
some $\sigma^{\prime}\in (2^{AP})^{\omega}$, then $\widehat{\sigma}$
is said to be a prefix of $\sigma$.

\begin{defn}A property $P$ is called a safety property if any
$\sigma\not\in P$ has a prefix $\widehat{\sigma}\in
BPref(P).$\end{defn}

The following lemma gives a simple characterization of satisfaction
relation between quantum systems and safety properties.

\begin{lem}\label{bad}For any quantum automaton $\mathbb{A}$, and for
any safety property $P$, $\mathbb{A}\models P$ if and only if
$$Traces_{fin}(\mathbb{A})\cap BPref(P)=\emptyset.$$\end{lem}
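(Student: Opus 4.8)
The plan is to prove both directions by passing between finite path fragments and infinite paths, exploiting the fact that every path fragment of $\mathbb{A}$ extends to a full path. This extendability is the quantum analogue of non-blockingness for classical transition systems: since each $U_\alpha$ is a unitary operator defined on all of $H$, from any state we can always apply some action (assuming $Act\neq\emptyset$) to obtain a successor, so no state is a dead end. I will combine this with the defining property of a bad prefix (no infinite extension lies in $P$) for one direction, and the defining property of a safety property (every $\sigma\notin P$ has a bad prefix) for the other. The correspondence $L(\widehat{\pi})=L(|\psi_0\rangle)\cdots L(|\psi_n\rangle)$ between a path fragment and its trace, together with the fact that every initial segment of a path is itself a path fragment, is used throughout as routine bookkeeping.

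For the ``only if'' direction, I would argue at the level of a single finite trace. Take any $\widehat{\sigma}\in Traces_{fin}(\mathbb{A})$, say $\widehat{\sigma}=L(\widehat{\pi})$ for a path fragment $\widehat{\pi}=|\psi_0\rangle\cdots|\psi_n\rangle$. Extending $\widehat{\pi}$ to an infinite path $\pi$ yields a trace $L(\pi)\in Traces(\mathbb{A})$, and the hypothesis $\mathbb{A}\models P$ gives $L(\pi)\in P$. Since $\widehat{\sigma}$ is a prefix of $L(\pi)$, there exists at least one infinite word extending $\widehat{\sigma}$ that lies in $P$, namely $L(\pi)$ itself; hence $\widehat{\sigma}\notin BPref(P)$. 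As $\widehat{\sigma}$ was arbitrary, this shows $Traces_{fin}(\mathbb{A})\cap BPref(P)=\emptyset$.

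For the ``if'' direction, I would take any $\sigma=L(\pi)\in Traces(\mathbb{A})$ with $\pi=|\psi_0\rangle|\psi_1\rangle\cdots$ and suppose, toward a contradiction, that $\sigma\notin P$. Because $P$ is a safety property, $\sigma$ has a prefix $\widehat{\sigma}\in BPref(P)$; being a length-$(n+1)$ prefix of $L(\pi)$, this $\widehat{\sigma}$ equals $L(|\psi_0\rangle)\cdots L(|\psi_n\rangle)=L(\widehat{\pi})$ for the initial segment $\widehat{\pi}=|\psi_0\rangle\cdots|\psi_n\rangle$ of $\pi$, which is a path fragment. Thus $\widehat{\sigma}\in Traces_{fin}(\mathbb{A})\cap BPref(P)$, contradicting the assumption that this intersection is empty. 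Therefore $\sigma\in P$, and since $\sigma$ was arbitrary, $Traces(\mathbb{A})\subseteq P$, i.e. $\mathbb{A}\models P$.

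The only genuinely substantive step—and the one I would flag as the main obstacle—is the extension of a path fragment to an infinite path in the ``only if'' direction; the rest transfers verbatim from the classical argument. This extension is precisely where the quantum automaton model enters, and it is unconditionally available here because the $U_\alpha$ are total operators on $H$, so every path fragment continues to a path. I would note in passing the degenerate case $Act=\emptyset$, in which $\mathbb{A}$ has no infinite paths at all and the statement must be read under the standing convention that paths exist; under the natural assumption $Act\neq\emptyset$ this caveat does not arise and the equivalence holds as stated.
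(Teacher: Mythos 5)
Your proof is correct, and it is essentially the argument the paper intends: the paper states this lemma \emph{without} proof, treating it as a verbatim carry-over of the classical Alpern--Schneider characterization of safety, and what you have written is exactly that standard two-direction argument, with the one genuinely model-dependent ingredient---every path fragment of $\mathbb{A}$ extends to an infinite path because the $U_\alpha$ are total (unitary) operators on $H$, given $Act\neq\emptyset$---correctly isolated and justified. The only loose end is the degenerate possibility that the bad prefix supplied by the safety definition is the empty word (your argument silently assumes it has length at least $1$); this is harmless, since $BPref(P)$ is closed under extension, so one may always pass to the length-$1$ prefix $L(|\psi_0\rangle)$, which does lie in $Traces_{fin}(\mathbb{A})$.
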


For any $\widehat{\sigma}_1,\widehat{\sigma}_2\in (2^{AP})^{\ast}$,
if there is $\widehat{\sigma}\in (2^{AP})^{\ast}$ such that
$\widehat{\sigma}_1=\widehat{\sigma}_2\widehat{\sigma}$, then
$\widehat{\sigma}_2$ is called a prefix of $\widehat{\sigma}_1$ and
we write $\widehat{\sigma}_2\sqsubseteq \widehat{\sigma}_1$. It is
obvious that $\sqsubseteq$ is a partial order on $(2^{AP})^{\ast}$.
We write $MBPref(P)$ for the set of minimal bad prefixes of $P$,
that is, minimal elements of $BPref(P)$ 
according to order $\sqsubseteq.$ It is easy to see that
$BPref(P)$ in the definition of safety property and Lemma~\ref{bad}
can be replaced by $MBPref(P)$.

To conclude this subsection, we would like to point out that up to now our discussion on linear-time properties of quantum systems is almost the same as that for classical systems, e.g. 
the definition and characterization of safety property simply mimic their classical counterparts. However, some essential differences between classical and quantum systems will come out in the next subsection.  

\subsection{Invariants}

A special class of safety properties are invariants. Invariants will play a key role in the verification of safety properties for quantum systems. As in the classical case, the problem of 
model-checking a big class of safety properties will be reduced to the problem of checking invariants. 

\begin{defn}A property $P$ is said to be an invariant if there exists a
closed subspace $X$ of $H$ such that
\begin{equation}\label{inv}P=\{A_0A_1A_2...\in
(2^{AP})^{\omega}|\bigcap_{Y\in A_n}Y\subseteq X\ {\rm for\ all}\
n\geq 0\}.\end{equation}\end{defn}

Intuitively, the condition $$\bigcap_{Y\in A_n}Y\subseteq X$$ in
Eq.~(\ref{inv}) means that the atomic propositions in $A_n$ together
imply the proposition $X$. We will call $P$ the invariant defined by
$X$ and write $P=inv X$, and $X$ is often called the invariant condition of $invX$. 

As a concrete example, we consider
stabilizers~\cite{Go97}, which have been widely used in quantum error-correction (see for example~\cite{NC00}, Chapter 10) and measurement-based quantum computation~\cite{RBB03} as well as multi-partite teleportation and super-dense coding~\cite{WY08a, WY08b}.

\begin{exam} We write $$H_2=\{\alpha|0\rangle+\beta|1\rangle|\alpha,\beta\in\mathbf{C}\}$$ for the $2-$dimensional Hilbert space. So, $H_2$ 
is the state space of a single qubit. A state of a qubit is a vector $\alpha|0\rangle+\beta|1\rangle$ with $|\alpha|^{2}+|\beta|^{2}=1$. 
Let
$H=H_2^{\otimes n}$ be the tensor product of $n$ copies of $H_2$. Then it is the state space of $n$ qubits.
We write $I_2$ for the identity operator on $H_2$. The Pauli matrices $$X=\left(\begin{array}{cc}0 & 1\\1&0\end{array}\right),\ Y=\left(\begin{array}{cc}0 &-i\\ i&0\end{array}\right),\ Z=\left(\begin{array}{cc}1&0\\0&-1\end{array}\right)$$ are unitary operators on $H_2$. The set $$G_1=\{\pm I_2,\pm iI_2,\pm X,\pm iX,\pm Y,\pm iY,\pm Z,\pm iZ\}$$ forms a group with the composition of operators as its group operation. It is the Pauli group on a single qubit. More generally, the Pauli group on $n$ qubits is $$G_n=\{A_1\otimes ...\otimes A_n|A_1,...,A_n\in G_1\}$$  
Now let $S$ be a subgroup of
$G_n$ generated by $g_1,...,g_l$. Recall that a state
$|\psi\rangle\in H_2^{\otimes n}$ is stabilized by $S$ if
$g|\psi\rangle=|\psi\rangle$ for all $g\in S$. We put
$Act=\{\alpha_k:k=1,...,l\}$, $U_{\alpha_k}=g_k$ for $1\leq k\leq
l$, and $I=span\{|\psi\rangle\}$. Then
$$\mathbb{A}=(Act,\{U_\alpha|\alpha\in Act\},I)$$ is a quantum
automaton. Suppose that $AP$ contains all one-dimensional subspaces of $H$.
If $S$ is a stabilizer of $|\psi\rangle$, then 
$span\{|\psi\rangle\}$ is an invariant of $\mathbb{A}$, i.e.
$\mathbb{A}\models inv (span \{|\psi\rangle\}).$
Conversely, if $\mathbb{A}\models inv (span \{|\psi\rangle\})$, then 
$S$ is a stabilizer of $|\psi\rangle$ modulo phase shifts, i.e. 
for every $g\in S$, we have $g|\psi\rangle=e^{i\alpha}|\psi\rangle$ for some 
real number $\alpha$.
\end{exam}

Now we are going to give some conditions under which an invariant holds in a quantum automaton $\mathbb{A}=(Act,\{U_\alpha|\alpha\in Act\},I)$ with the state space $H$. 
First, we observe that $\mathbb{A}\models invX$ if and only if $
|\psi\rangle\models X$ for all states $|\psi\rangle\in R(\mathbb{A})$, that is, all states reachable from
some state $|\varphi\rangle\in I$. Note that the space $I$ of
initial states is a continuum. This is very different from the classical case where we usually only have finitely or countably infinitely many initial states. It will make that checking an invariant in a quantum system 
is much harder than that in a classical system. The following lemma shows that we only need to
consider the states reachable from a basis of $I$, which is a finite
set or at most a countably infinite set, under certain commutativity
of elements of $AP$ and closeness of $AP$ for join.

\begin{lem}\label{commut} Suppose that the initial states of quantum automaton $\mathbb{A}$ are
spanned by $\{|\psi_i\rangle\}$, that is,
$I=span\{|\psi_i\rangle\}$, and suppose that $AP$ is as in Lemma~\ref{commut1}. 
Then $\mathbb{A}\models invX$ if and only if $|\psi\rangle\models X$
for any state $|\psi\rangle$ reachable in $\mathbb{A}$ from some
$|\psi_i\rangle$, $i\geq 1$.
\end{lem}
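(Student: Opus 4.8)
The plan is to lean on the observation recorded immediately before the lemma, namely that $\mathbb{A}\models invX$ holds if and only if $|\psi\rangle\models X$ for every $|\psi\rangle\in R(\mathbb{A})$, and then to turn the two implications into a single linear-algebraic fact: reachability along a fixed sequence of actions is a linear operation, so it carries the span $I=span\{|\psi_i\rangle\}$ to the span of the images of the basis vectors, at which point Lemma~\ref{commut1} lets me pass from the basis vectors back to the whole span. The forward direction is then immediate and needs no commutativity: each $|\psi_i\rangle$ lies in $I=span\{|\psi_i\rangle\}$, so every state reachable from some $|\psi_i\rangle$ belongs to $R(\mathbb{A})$; hence if $\mathbb{A}\models invX$, i.e.\ every state of $R(\mathbb{A})$ satisfies $X$, then in particular every state reachable from a basis vector satisfies $X$.

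The converse is where the hypothesis on $AP$ is used. Assuming $|\psi\rangle\models X$ for every state reachable from some $|\psi_i\rangle$, I would take an arbitrary $|\psi\rangle\in R(\mathbb{A})$. By definition there is an initial state $|\phi\rangle\in I$ and a path fragment from $|\phi\rangle$ to $|\psi\rangle$, so $|\psi\rangle=W|\phi\rangle$ where $W=U_{\alpha_{n-1}}\cdots U_{\alpha_0}$ is the composite of the unitaries applied along that fragment. The key step is the linearity of $W$: writing $|\phi\rangle=\sum_i a_i|\psi_i\rangle$ and applying $W$ termwise gives $|\psi\rangle=\sum_i a_i(W|\psi_i\rangle)$, so $|\psi\rangle\in Y:=span\{W|\psi_i\rangle\}$. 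Now each $W|\psi_i\rangle$ is exactly the state reached from $|\psi_i\rangle$ along the same action sequence $\alpha_0,\dots,\alpha_{n-1}$, hence $W|\psi_i\rangle\models X$ by hypothesis. Since $W$ is unitary it sends the basis $\{|\psi_i\rangle\}$ of $I$ to a basis $\{W|\psi_i\rangle\}$ of $Y$, so Lemma~\ref{commut1} applied to the closed subspace $Y$ upgrades ``$W|\psi_i\rangle\models X$ for all $i$'' to ``$|\xi\rangle\models X$ for all $|\xi\rangle\in Y$''; in particular $|\psi\rangle\models X$. As $|\psi\rangle\in R(\mathbb{A})$ was arbitrary, $\mathbb{A}\models invX$.

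The step I expect to require the most care, though it is only a modest obstacle, is the structural justification that lets Lemma~\ref{commut1} apply: that $Y=span\{W|\psi_i\rangle\}$ is a genuine closed subspace and that $\{W|\psi_i\rangle\}$ is a basis of it. Both follow from $W$ being a Hilbert-space isomorphism, so that it is a homeomorphism carrying closed subspaces to closed subspaces and orthonormal bases to orthonormal bases, and commutes with the span and closure operations. In finite dimension these remarks are automatic; in the countably infinite case one must additionally check that $W$ commutes with the infinite sum $|\phi\rangle=\sum_i a_i|\psi_i\rangle$, which is guaranteed by the continuity (boundedness) of $W$. Beyond verifying these standard facts, the argument is simply the combination of the pre-lemma reachability criterion, the linearity of reachability along a fixed action sequence, and Lemma~\ref{commut1}.
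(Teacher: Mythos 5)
Your proposal is correct and follows essentially the same route as the paper's own proof: write the initial state as $\sum_i a_i|\psi_i\rangle$, push the composite unitary $W=U_{\alpha_n}\cdots U_{\alpha_1}$ through the sum so that $|\psi\rangle=\sum_i a_i\,W|\psi_i\rangle$ with each $W|\psi_i\rangle$ reachable from $|\psi_i\rangle$, and invoke Lemma~\ref{commut1} to pass from the basis states back to the span. Your additional remarks on closedness of $span\{W|\psi_i\rangle\}$ and continuity of $W$ in the infinite-dimensional case merely make explicit what the paper leaves implicit in its appeal to Lemma~\ref{commut1}.
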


\begin{proof} The \textquotedblleft only if" part is obvious. Now
we prove the \textquotedblleft if" part. It suffices to prove the following:\begin{itemize}\item\ Claim: If
$|\xi\rangle\models X$ for all state
$|\xi\rangle$ reachable from some $|\psi_i\rangle$, $i\geq 1$, then
$|\psi\rangle\models X$ for any state
$|\psi\rangle$ reachable from some state $|\varphi\rangle\in I$.\end{itemize} In fact, for
any $|\varphi\rangle\in I$, we can write
$$|\varphi\rangle=\sum_{i}a_i|\psi_i\rangle$$ for some complex numbers
$a_i$ because $I=span\{|\psi_i\rangle\}$. If $|\psi\rangle$ is
reachable from $|\varphi\rangle$, then there are
$\alpha_1,...,\alpha_n\in Act$, $n\geq 0$ such that
$$|\psi\rangle=U_{\alpha_n}...U_{\alpha_1}|\varphi\rangle.$$ We put
$$|\xi_i\rangle=U_{\alpha_n}...U_{\alpha_1}|\psi_i\rangle$$ for each
$i\geq 1$. Then $$|\psi\rangle=\sum_{i}a_i|\xi_i\rangle,$$ and
$|\xi_i\rangle$ is reachable from $|\psi_i\rangle$. It immediately follows from Lemma~\ref{commut1} that $|\psi\rangle\models X$ provided that $|\xi_i\rangle\models X$ for all $i\geq 1$. This completes the proof. 
\end{proof}

The above lemma will play a key role in the proofs of the main results in this paper (Theorems~\ref{main1} and~\ref{main2} below). It is worth mentioning again that both of them appeal to a certain commutativity of atomic propositions in $AP$. As is well-known,  non-commutativity of observables is one of the most essential features that distinguish quantum systems from classical systems. 
So, the commutativity condition in these lemmas is very restrictive. Fortunately, atomic propositions dealt with in these theorems just automatically enjoy the required commutativity.

The following
simple corollary gives a sufficient condition for invariant, which meets our intuition of invariant of a system very well.

\begin{cor}Suppose that $AP$ satisfies the two conditions in
Lemma~\ref{commut1}, and suppose that $I=span\{|\psi_i\rangle\}$. If
\begin{enumerate}\item $|\psi_i\rangle\models X$ for all $i$; and\item $U_\alpha Y\subseteq Y$ for all $Y\in AP$ and for
all $\alpha\in Act$,
\end{enumerate}then $\mathbb{A}\models inv X$.
\end{cor}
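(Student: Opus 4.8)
The plan is to invoke Lemma~\ref{commut} to reduce the global invariant $\mathbb{A}\models invX$ to a statement about single reachable states, and then to exploit hypothesis 2) to transport the satisfaction relation along paths. Since $AP$ satisfies the two conditions of Lemma~\ref{commut1} and $I=span\{|\psi_i\rangle\}$, Lemma~\ref{commut} applies directly and tells us that $\mathbb{A}\models invX$ holds if and only if $|\psi\rangle\models X$ for every state $|\psi\rangle$ reachable from some basis vector $|\psi_i\rangle$. So it suffices to establish this latter condition.

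Fix a basis vector $|\psi_i\rangle$ and a state $|\psi\rangle$ reachable from it, say $|\psi\rangle=U_{\alpha_n}\cdots U_{\alpha_1}|\psi_i\rangle$ for some $\alpha_1,\dots,\alpha_n\in Act$. The key step is to show the inclusion of label sets $L(|\psi_i\rangle)\subseteq L(|\psi\rangle)$. Indeed, take any $Y\in L(|\psi_i\rangle)$, so that $Y\in AP$ and $|\psi_i\rangle\in Y$. By hypothesis 2), $U_\alpha Y\subseteq Y$ for every $\alpha\in Act$; iterating this along the path (a short induction on $n$) gives $U_{\alpha_n}\cdots U_{\alpha_1}Y\subseteq Y$, and hence $|\psi\rangle=U_{\alpha_n}\cdots U_{\alpha_1}|\psi_i\rangle\in Y$. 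Thus $Y\in L(|\psi\rangle)$, which proves the claimed inclusion.

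With $L(|\psi_i\rangle)\subseteq L(|\psi\rangle)$ in hand, intersecting over the larger index set yields the smaller subspace, so
$$\bigcap_{Z\in L(|\psi\rangle)}Z\subseteq\bigcap_{Z\in L(|\psi_i\rangle)}Z.$$
By hypothesis 1), $|\psi_i\rangle\models X$, i.e.\ the right-hand side above is contained in $X$; therefore $\bigcap_{Z\in L(|\psi\rangle)}Z\subseteq X$, which is exactly $|\psi\rangle\models X$. Since $|\psi\rangle$ was an arbitrary state reachable from an arbitrary basis vector, the reduction supplied by Lemma~\ref{commut} completes the argument.

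I do not anticipate a serious obstacle: the only genuine content is recognizing that the subspace-invariance condition 2) forces the label-set inclusion $L(|\psi_i\rangle)\subseteq L(|\psi\rangle)$, after which the conclusion is pure monotonicity of intersection combined with the hypothesis $|\psi_i\rangle\models X$. Note that the commutativity and join-closure conditions on $AP$ are used only to license the appeal to Lemma~\ref{commut}; they play no direct role in the transport step itself.
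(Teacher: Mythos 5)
Your proof is correct and takes essentially the same route as the paper's: both rest on the observation that condition 2) forces the label-set inclusion $L(|\psi_i\rangle)\subseteq L(|\psi\rangle)$ along transitions, then combine monotonicity of intersection with hypothesis 1) and conclude via Lemma~\ref{commut}. The only difference is organizational---the paper states a one-step claim ($|\psi\rangle\models X$ implies $U_\alpha|\psi\rangle\models X$) and iterates it, while you run the induction on the subspace inclusion $U_{\alpha_n}\cdots U_{\alpha_1}Y\subseteq Y$ over the whole path at once.
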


\begin{proof} We first have the following: \begin{itemize}\item \ Claim: 
$|\psi\rangle\models X$ implies $U_\alpha |\psi\rangle\models X$ for
all $\alpha\in Act$.\end{itemize} In fact, it follows from condition
2) that \begin{equation*}\begin{split}L(|\psi\rangle)&=\{Y\in AP||\psi\rangle\in Y\}\\ &\subseteq
\{Y\in AP||U_\alpha |\psi\rangle\in Y\}=L(U_\alpha |\psi\rangle).\end{split}\end{equation*}
Thus, if $|\psi\rangle\models X$, then
$$\bigcap_{Y\in L(U_\alpha |\psi\rangle)}Y\subseteq \bigcap_{Y\in L(|\psi\rangle)}Y\subseteq
X$$ and $U_\alpha |\psi\rangle\models X$. Now the proof is completed
by simply combining the above claim, condition (1) and
Lemma~\ref{commut}.\end{proof}

\subsection{Liveness Properties}

Liveness properties are another important kind of linear-time properties that are in a sense dual to safety properties. A liveness property specifies that \textquotedblleft something good will happen
eventually"~\cite{La77}. Alpern and Schneider's definition of liveness property~\cite{AS85} can be simply extended to quantum systems.

\begin{defn}A linear-time property $P\subseteq (2^{AP})^{\omega}$ is called a liveness property if for any $\widehat{\sigma}\in (2^{AP})^{\ast}$ there exists $\sigma\in (2^{AP})^{\omega}$ such that $\widehat{\sigma}\sigma\in P$.
\end{defn} 

 Some interesting characterizations of liveness properties (see~\cite{BK08}, Lemmas 3.35 and 3.38 and Theorem 3.37) can be easily generalized to the quantum case because their proofs are only based on the upper structure 
 of linear-time properties, which are entirely classical, and irrelevant to their bottom structure, namely the state spaces of quantum systems.  

Local unitary equivalence~\cite{Kr10} is a key criterion for classification of multipartite entanglements of which physicists are still far from a complete understanding.
The following example shows that local unitary equivalence can be properly described in
terms of liveness.

\begin{exam}Suppose that $H$ is a Hilbert space and $H^{\otimes n}$ is the tensor product of $n$ copies of $H$. Let $\mathcal{U}$ be a set of unitary operators on $H$. It is unnecessary that $\mathcal{U}$ contains all 
unitary operators on $H$. The elements of $\mathcal{U}$ can be understood as the operations allowed in the scenario under consideration. For any $U\in\mathcal{U}$ and $1\leq i\leq n$, $$U_i=U\otimes \bigotimes_{j\neq i}I_H$$ is a unitary operator on $H^{\otimes n}$ which performs $U$ on the $i$th copy of $H$ and does nothing on the other copies, where $I_H$ is the identity operator on $H$. So, $U_i$ can be seen as a local operation on $H^{\otimes n}$. For any two $n-$partite states $|\varphi\rangle,|\psi\rangle\in H^{\otimes n}$, if there exists a sequence $U^{(1)}_{i_1},...,U^{(m)}_{i_m}$ of local operations such that $$|\psi\rangle=U^{(1)}_{i_1}...U^{(m)}_{i_m}|\varphi\rangle,$$ then we say that $|\varphi\rangle$ and $|\psi\rangle$ are locally $\mathcal{U}-$equivalent. 

We can naturally construct a quantum automaton in $H^{\otimes n}$ that starts in state $|\varphi\rangle$ and performs local $\mathcal{U}-$operations: \begin{equation*}
\mathbb{A}=(H^{\otimes n}, Act=\{U_i|U\in\mathcal{U}\ {\rm and}\ 1\leq i\leq n\}, I=span\{|\varphi\rangle\})\end{equation*} Now we put $$P=\{A_0A_1A_2...\in (2^{AP})^{\omega}|\exists n\geq 0. A_n=span\{|\psi\rangle\}\}$$ Obviously, $P$ is a liveness property. It is easy to see that if $|\varphi\rangle$ and $|\psi\rangle$ are locally $\mathcal{U}-$equivalent, then $\mathbb{A}\models P$. Conversely, if $\mathbb{A}\models P$, then $|\varphi\rangle$ and $|\psi\rangle$ are locally $\mathcal{U}-$equivalent modulo phase shifts, i.e. $$|\psi\rangle=e^{i\alpha}U^{(1)}_{i_1}...U^{(m)}_{i_m}|\varphi\rangle$$ for some real number $\alpha$ and local operations $U^{(1)}_{i_1},...,U^{(m)}_{i_m}$.\end{exam}

\subsection{Persistence Properties}

Persistence properties are a very useful class of liveness properties. A persistence property asserts that a certain condition always holds from some moment on. 

\begin{defn}\label{d-pers}A property $P$ is called a persistence property if there
exists $X\in S(H)$ such that
\begin{equation}\label{pers}P=\{A_0A_1A_2...\in (2^{AP})^{\omega}|\exists
m.\forall n\geq m.\bigcap_{Y\in A_n}Y\subseteq X\}\end{equation}
\end{defn}

In this case that Eq.~(\ref{pers}) holds, we say that $P$ is the
persistence property defined by $X$ and write $P=pers X$.

As in the case of invariants, to check whether a persistence property is satisfied by a quantum automaton we have to consider the behaviors of the automaton starting in all initial states which form a continuum. The next lemma indicates that it suffices to consider the behavior starting in some basis states of the space of initial states if a certain commutativity is imposed on the atomic propositions in $AP$.

\begin{lem}\label{l-pers}Let $AP$ be as in Lemma~\ref{commut1}. Suppose that $I$ is
finite-dimensional and
$I=span\{|\psi_1\rangle,...,|\psi_k\rangle\}$. Then
$\mathbb{A}\models pers X$ if and only if for each $1\leq i\leq k$,
and for each path
$$|\psi_i\rangle=|\zeta_0\rangle\stackrel{U_{\alpha_0}}{\rightarrow}|\zeta_1\rangle\stackrel{U_{\alpha_1}}{\rightarrow}
|\zeta_2\rangle\stackrel{U_{\alpha_2}}{\rightarrow}...$$ starting in
a basis state $|\psi_i\rangle,$ there exists $m\geq 0$ such that
$|\zeta_n\rangle\models X$ for all $n\geq m$.
\end{lem}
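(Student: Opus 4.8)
The plan is to unfold both the definition of $\mathbb{A}\models pers X$ and Definition~\ref{d-pers}, and to observe that $pers X$ holds along a path $\pi=|\phi_0\rangle|\phi_1\rangle\ldots$ precisely when there is an $m$ with $|\phi_n\rangle\models X$ for all $n\geq m$, since the condition $\bigcap_{Y\in L(|\phi_n\rangle)}Y\subseteq X$ appearing in Eq.~(\ref{pers}) is by definition exactly $|\phi_n\rangle\models X$. With this reformulation the \emph{only if} direction is immediate: each basis vector $|\psi_i\rangle$ lies in $I=span\{|\psi_1\rangle,\ldots,|\psi_k\rangle\}$, so every path starting from $|\psi_i\rangle$ is a genuine path of $\mathbb{A}$ and therefore already satisfies the persistence condition.

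For the \emph{if} direction I would follow the pattern of the proof of Lemma~\ref{commut}. Take an arbitrary path $\pi=|\phi_0\rangle|\phi_1\rangle\ldots$ of $\mathbb{A}$, determined by some action sequence $\alpha_0,\alpha_1,\ldots$ and some initial state $|\phi_0\rangle=|\varphi\rangle\in I$. Writing $|\varphi\rangle=\sum_{i=1}^{k}a_i|\psi_i\rangle$ and setting $|\zeta_n^{(i)}\rangle=U_{\alpha_{n-1}}\cdots U_{\alpha_0}|\psi_i\rangle$, linearity gives $|\phi_n\rangle=\sum_{i=1}^{k}a_i|\zeta_n^{(i)}\rangle$. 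For each fixed $i$, the sequence $|\psi_i\rangle=|\zeta_0^{(i)}\rangle|\zeta_1^{(i)}\rangle\ldots$ is a path of $\mathbb{A}$ starting in the basis state $|\psi_i\rangle$, so the hypothesis supplies an index $m_i$ with $|\zeta_n^{(i)}\rangle\models X$ for all $n\geq m_i$.

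The crucial move is then to put $m=\max_{1\leq i\leq k}m_i$, which is finite precisely because $I$ is finite-dimensional and hence has only finitely many basis vectors. For every $n\geq m$ we then have $|\zeta_n^{(i)}\rangle\models X$ simultaneously for all $i$. Since each $U_{\alpha_{n-1}}\cdots U_{\alpha_0}$ is unitary and therefore invertible, the vectors $\{|\zeta_n^{(i)}\rangle\}_{i=1}^{k}$ remain linearly independent and form a basis of $Y_n=span\{|\zeta_n^{(i)}\rangle:1\leq i\leq k\}$; as $AP$ meets the hypotheses of Lemma~\ref{commut1}, that lemma yields $|\xi\rangle\models X$ for every $|\xi\rangle\in Y_n$, and in particular $|\phi_n\rangle=\sum_{i}a_i|\zeta_n^{(i)}\rangle\in Y_n$ gives $|\phi_n\rangle\models X$. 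Thus the single index $m$ witnesses the persistence condition along $\pi$, completing the argument.

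I expect the only delicate point—and the reason the finite-dimensionality assumption on $I$ is imposed—to be the passage to a \emph{uniform} bound $m$: the per-basis indices $m_i$ depend on $i$, and only the finiteness of the basis lets us collapse them into one $m$ that serves the whole superposed path simultaneously. The appeal to Lemma~\ref{commut1}, and the commutativity and join-closure of $AP$ lurking behind it, is what transfers satisfaction of $X$ from the basis components $|\zeta_n^{(i)}\rangle$ to their linear combination $|\phi_n\rangle$ at each such $n$.
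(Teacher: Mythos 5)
Your proof is correct and takes essentially the same route as the paper's: decompose an arbitrary initial state over the finite basis of $I$, transport each basis vector along the same action sequence, take $m=\max_i m_i$ (finite precisely because $I$ is finite-dimensional), and invoke Lemma~\ref{commut1} to pass satisfaction of $X$ from the components $|\zeta_n^{(i)}\rangle$ to the superposition $|\phi_n\rangle$. Your explicit remark that unitarity preserves linear independence, so that $\{|\zeta_n^{(i)}\rangle\}_i$ is genuinely a basis of its span as Lemma~\ref{commut1} requires, is a detail the paper leaves implicit.
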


\begin{proof} We only need to prove the \textquotedblleft if" part.
By Definition~\ref{d-pers} it suffices to show that for any path
$$|\eta_0\rangle \stackrel{U_{\alpha_0}}{\rightarrow}|\eta_1\rangle\stackrel{U_{\alpha_1}}{\rightarrow}
|\eta_2\rangle\stackrel{U_{\alpha_2}}{\rightarrow}...$$ in
$\mathbb{A}$, where $|\eta_0\rangle\in I$, we can find $m\geq 0$
such that $|\eta_n\rangle\models X$ for all
$n\geq m$.

Since $|\eta_0\rangle\in
I=span\{|\psi_1\rangle,...,|\psi_k\rangle\}$, we have
$$|\eta_0\rangle=\sum_{i=1}^{k}a_i|\psi_i\rangle$$ for some complex
numbers $a_i$ $(1\leq i\leq k)$. Put
$$|\zeta_{ij}\rangle=U_{\alpha_j}...U_{\alpha_1}U_{\alpha_0}|\psi_i\rangle$$
for all $1\leq i\leq k$ and $j\geq 0$. A simple calculation shows
that $$|\eta_j\rangle=\sum_{i=1}^{k}a_i|\zeta_{ij}\rangle$$ for all
$j\geq 0$. On the other hand, for each $1\leq i\leq k$, we have the
following transitions:
$$|\psi_i\rangle=|\zeta_{i0}\rangle\stackrel{U_{\alpha_0}}{\rightarrow}|\zeta_{i1}\rangle\stackrel{U_{\alpha_1}}{\rightarrow}
|\zeta_{i2}\rangle\stackrel{U_{\alpha_2}}{\rightarrow}...$$ in
$\mathbb{A}$. By the assumption, there is $m_i\geq 0$ such that
$|\zeta_{in}\rangle\models X$ for all $n\geq
m_i$. Let $m=\max_{i=1}^{k}m_i$. Then for all $n\geq m$, we have $|\zeta_{in}\rangle\models X$ for all $1\leq i\leq k$ and $$|\eta_n\rangle=\sum_{i=1}^{k}a_i|\zeta_{in}\rangle.$$ By Lemma~\ref{commut1} we obtain $|\eta_n\rangle\models X$ and thus complete the proof. 
\end{proof}

Note that except the conditions assumed in Lemma~\ref{commut1}, the above lemma also requires that the space of the initial states is finite-dimensional. This requirement is needed in the last step of the proof of the above lemma. 

Lemmas~\ref{commut} and~\ref{l-pers} will play a key role in the proofs of the main results in this paper (Theorems~\ref{main1} and~\ref{main2} below). It is worth mentioning again that both of them appeal to a certain commutativity of atomic propositions in $AP$. As is well-known,  non-commutativity of observables is one of the most essential features that distinguish quantum systems from classical systems. 
So, the commutativity condition in these lemmas is very restrictive. Fortunately, atomic propositions dealt with in these theorems just automatically enjoy the required commutativity.

The above lemma requires that the space of the initial states is finite-dimensional, but the Hilbert space $H$ can be infinite-dimensional. The following lemma indicates that persistence properties and invariants coincide whenever $H$ is finite-dimensional.  

\begin{lem}\label{inv-pers}
Suppose that $H$ is finite-dimensional and $AP$ is as in Lemma~\ref{commut1}. Then $\mathbb{A}\models persX$ if and only if $\mathbb{A}\models invX$.
\end{lem}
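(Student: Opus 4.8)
The plan is to prove the two inclusions separately; the forward one is immediate, while the backward one is exactly where reversibility and finite-dimensionality do the work. First I would note that, as subsets of $(2^{AP})^{\omega}$, one always has $invX\subseteq persX$: if $\bigcap_{Y\in A_n}Y\subseteq X$ holds for every $n$, then it holds for all $n\geq m$ with $m=0$. Hence $\mathbb{A}\models invX$ implies $\mathbb{A}\models persX$, and this ``only if'' direction needs no hypothesis on $AP$ or $H$.

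For the converse, assume $\mathbb{A}\models persX$. By Lemma~\ref{commut} (applicable since $AP$ is as in Lemma~\ref{commut1} and $I=span\{|\psi_i\rangle\}$), it suffices to show that every state $|\varphi\rangle$ reachable from a basis state $|\psi_i\rangle$ satisfies $X$. The key structural observation I would extract from Lemma~\ref{commut1} is that, modulo the zero vector, the states satisfying $X$ form a closed subspace. Concretely, set $W=span\{|\psi\rangle\in H:|\psi\rangle\models X\}$. Since $H$ is finite-dimensional, this spanning family contains a basis of $W$ consisting of states that satisfy $X$, so Lemma~\ref{commut1} applied to $Y=W$ shows that every vector of $W$ satisfies $X$; conversely each satisfying state lies in $W$ by construction. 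Thus for nonzero $|\psi\rangle$ we have $|\psi\rangle\models X$ iff $|\psi\rangle\in W$, and $W$, being a finite-dimensional subspace, is topologically closed. This closedness is precisely what will make a limit argument legitimate.

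Now fix a reachable unit state $|\varphi\rangle=U_{\alpha_r}\cdots U_{\alpha_1}|\psi_i\rangle$ and write $w=U_{\alpha_r}\cdots U_{\alpha_1}$, a unitary operator. Repeating the action word $\alpha_1,\ldots,\alpha_r$ forever produces a genuine path of $\mathbb{A}$ starting in $|\psi_i\rangle\in I$, whose state at position $pr$ is $w^{p}|\psi_i\rangle$; in particular $|\varphi\rangle=w|\psi_i\rangle$ occurs at position $r$. Applying the definition of $persX$ to this path gives $m$ such that $w^{p}|\psi_i\rangle\models X$ whenever $pr\geq m$. The main obstacle is now to recover $|\varphi\rangle$ itself as a limit of such persistently satisfying states, and this is where reversibility enters through compactness. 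Because $H$ is finite-dimensional the unitary group is compact, so the powers $w^{p}$ admit a convergent (hence Cauchy) subsequence; combined with the norm identity $\|w^{b}-I_H\|=\|w^{a}(w^{b}-I_H)\|=\|w^{a+b}-w^{a}\|$, this yields arbitrarily large exponents $p$ for which $w^{p}$ is arbitrarily close to $I_H$. For such $p$ the state $w^{p+1}|\psi_i\rangle$ occurs at position $(p+1)r\geq m$, hence satisfies $X$ and lies in $W$, while simultaneously $w^{p+1}|\psi_i\rangle=w^{p}(w|\psi_i\rangle)\to w|\psi_i\rangle=|\varphi\rangle$. As $W$ is closed and $|\varphi\rangle$ is a unit (so nonzero) vector, $|\varphi\rangle\in W$ gives $|\varphi\rangle\models X$. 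Since $|\varphi\rangle$ was an arbitrary state reachable from a basis state, Lemma~\ref{commut} delivers $\mathbb{A}\models invX$, completing the argument.
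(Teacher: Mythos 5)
Your argument is sound (up to one edge case noted below), but it takes a genuinely different route from the paper's. The paper's proof is purely algebraic: it chooses a basis $\{|\psi_1\rangle,\dots,|\psi_l\rangle\}$ of $RS(\mathbb{A})$ consisting of reachable states, extends a path fragment reaching each $|\psi_i\rangle$ into an infinite path by iterating a single unitary $U\in\{U_\alpha|\alpha\in Act\}$, uses $pers\,X$ to find one $m$ with $U^{m}|\psi_i\rangle\models X$ for every $i$, deduces from Lemma~\ref{commut1} that every state of $span\{U^{m}|\psi_i\rangle\}=U^{m}RS(\mathbb{A})$ satisfies $X$, and then removes the offset by a dimension count: $URS(\mathbb{A})\subseteq RS(\mathbb{A})$ and $\dim(URS(\mathbb{A}))=\dim(RS(\mathbb{A}))$ force $URS(\mathbb{A})=RS(\mathbb{A})$, hence $U^{m}RS(\mathbb{A})=RS(\mathbb{A})$. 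You instead argue topologically: Lemma~\ref{commut1} makes the satisfying states into a finite-dimensional, hence closed, subspace $W$, and compactness of the unitary group gives recurrence --- arbitrarily large $p$ with $w^{p}$ arbitrarily close to $I_H$ --- so each reachable state is a limit of states lying beyond the persistence threshold, and closedness of $W$ finishes. Both proofs funnel through Lemmas~\ref{commut} and~\ref{commut1}, and finite-dimensionality is essential in both, but it enters the paper's proof as a dimension count and yours as compactness plus closedness. The paper's route is more elementary (no limits or subsequences); yours isolates a reusable recurrence principle for unitaries that makes transparent why reversibility is what rules out a state ``permanently leaving'' $X$.

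One gap to patch: your cyclic-path construction presupposes that the word reaching $|\varphi\rangle$ has length $r\geq 1$, so it does not cover the basis states $|\psi_i\rangle$ themselves, which are reachable via the empty word ($r=0$, where the position bound $(p+1)r\geq m$ fails unless $m=0$). These are handled by the identical recurrence argument applied to the path $|\psi_i\rangle,\ U_\alpha|\psi_i\rangle,\ U_\alpha^{2}|\psi_i\rangle,\dots$ for any fixed $\alpha\in Act$: along a subsequence with $U_\alpha^{p}\to I_H$ and $p\geq m$, the states $U_\alpha^{p}|\psi_i\rangle\in W$ converge to $|\psi_i\rangle$. (If $Act=\emptyset$, there are no paths and both satisfaction relations hold vacuously.) A cosmetic point: what you call the ``only if'' direction ($inv\Rightarrow pers$) is, in the lemma's phrasing, the ``if'' direction.
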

\begin{proof}The \textquotedblleft if" part is obvious. We now prove the \textquotedblleft only if" part. Assume that $\mathbb{A}\models pers X$ and we want to show that $\mathbb{A}\models inv X$. It suffices to demonstrate that $|\psi\rangle\models X$ for all $|\psi\rangle\in RS(\mathbb{A})$. Since $H$ is finite-dimensional, we can find a maximal set $\{|\psi_1\rangle,...,|\psi_l\rangle\}$ of linearly independent states in $RS(\mathbb{A})$. Then it should be a basis of $RS(\mathbb{A})$.

For each $1\leq i\leq l$, let $|\varphi_0\rangle|\varphi_1\rangle...|\varphi_n\rangle$ be a path fragment in $\mathbb{A}$ such that $|\varphi_0\rangle\in I$ and $|\varphi_n\rangle=|\psi_i\rangle$. We arbitrarily choose a unitary operator $U\in\{U_\alpha|\alpha\in Act\}$ and set $$|\varphi_{n+k}\rangle=U^{k}|\varphi_n\rangle$$ for all $k\geq 1$. Then the path fragment $|\varphi_0\rangle|\varphi_1\rangle...|\varphi_n\rangle$ is extended to a path $|\varphi_0\rangle|\varphi_1\rangle...|\varphi_n\rangle|\varphi_{n+1}\rangle...$ in $\mathbb{A}$. It follows from the assumption of $\mathbb{A}\models pers X$ that there exists $m_i\geq 0$ with $$U^{k}|\psi_i\rangle=|\varphi_{n+k}\rangle\models X$$ for all $k\geq m_i$. Put $m=\max_{i=1}^{l}m_i$. Then $U^{m}|\psi_i\rangle\models X$ for all $1\leq i\leq l$. By Lemma~\ref{commut1} we obtain that $|\psi\rangle\models X$ for all $$|\psi\rangle\in span\{U^{m}|\psi_i\rangle|1\leq i\leq l\}=U^{m}RS(\mathbb{A}).$$ By definition we have $UR(\mathbb{A})\subseteq R(\mathbb{A})$ and thus $URS(\mathbb{A})\subseteq RS(\mathbb{A})$. On the other hand, $\dim (URS(\mathbb{A}))=\dim (RS(\mathbb{A})$ because $U$ is a unitary operator. Then it follows that $URS(\mathbb{A})=RS(\mathbb{A})$. Consequently, it holds that $U^{m}RS(\mathbb{A})=RS(\mathbb{A})$, and we complete the proof.
\end{proof}
 
 We have a counterexample showing that the above lemma is not true in an infinite-dimensional Hilbert space $H$.  
 
 \begin{exam}Consider the space $l_2$ of square summable sequences: \begin{equation*}\begin{split}l_2=\{\sum_{n=-\infty}^{\infty}\alpha_n|n\rangle:\alpha_n\in\mathbf{C}\
{\rm for\ all}\ n\ {\rm and}\
\sum_{n=-\infty}^{\infty}|\alpha_n|^{2}<\infty\}.\end{split}\end{equation*} The inner product in
$l_2$ is defined by
$$(\sum_{n=-\infty}^{\infty}\alpha_n|n\rangle,\sum_{n=-\infty}^{\infty}\alpha^{\prime}|n\rangle)=\sum_{n=-\infty}^{\infty}
\alpha_n^{\ast}\alpha_n^{\prime}$$ for all
$\alpha_n,\alpha_n^{\prime}\in\mathbb{C}$, $-\infty <n<\infty$. The translation operator $U_{+}$ on
$l_2$ is defined by
$$U_{+}|n\rangle=|n+1\rangle$$ for all $n$. It is easy
to verify that $U_{+}$ is a unitary operator. Let $Act$ consist of a single action name $+$, $Act=\{+\}$, and $I=span\{|0\rangle\}$. Then $\mathbb{A}=(Act,\{U_\alpha|\alpha\in Act\},I)$ is a quantum automaton. Let $k$ be an integer, and let $$[k)=span\{|n\rangle|n\geq k\}$$ and $$(k-1]=span\{|n\rangle|n\leq k-1\}.$$ Put $AP=\{[k),(k-1], l_2\}$. Then $AP$ satisfies the two conditions in Lemma~\ref{commut1}. It is easy to see that $\mathbb{A}\models pers [k)$ but $\mathbb{A}\models inv [k)$ does not hold provided $k>0$. 
\end{exam}

\section{Algorithms for Checking Invariants}\label{algo}

In this section, we present an algorithm for checking invariants of a quantum automaton $\mathbb{A}=(Act,\{U_\alpha|\alpha\in Act\},I)$ in a finite-dimensional state space $H$. The design of this algorithm is based on Lemma~\ref{reachable} and the following observation: if $\{|\psi_1\rangle,...|\psi_l\rangle\}$ is a basis of $RS(\mathbb{A})$, then $\mathbb{A}\models inv X$ if and only if $|\psi_i\rangle\models X$ for all $1\leq i\leq l$. The last condition can be checked by a forward depth-first search.

\bigskip\

\textit{Algorithm:} Invariant checking.

\smallskip\

\textbf{Input}: \begin{enumerate}\item The set $\{U_\alpha|\alpha\in{Act}\}$ of the unitary operators in $\mathbb{A}$; \item A basis $\{|\psi_1\rangle,|\psi_2\rangle,...,|\psi_k\rangle\}$ of the space $I$ of initial states; \item A subspace $X$ of $H$.\end{enumerate}

\textbf{Output}: true if $\mathbb{A}\models inv X$, otherwise false.\\
{\bfseries set of} state $B := \phi$; (*a basis of $RS(\mathbb{A})$*)\\
{\bfseries stack of} state $S := \varepsilon$; (*the empty stack*)\\
{\bfseries bool} $b := \mathrm{true}$; (*all states in $B$ satisfy $X$*)\\
{\bfseries for} $i=1,2,\cdots,k$ {\bfseries do}\\
\indent$B := B \cup \{|\psi_i\rangle\}$; (*initial states are reachable*)\\
\indent$push(|\psi_i\rangle, S)$; (*start a depth-first search with initial states*)\\
\indent$b := b \wedge (|\psi_i\rangle\models X)$; (*check if all initial states satisfy $X$*)\\
{\bfseries od}\\
{\bfseries while} $(b \wedge S \neq \phi)$ {\bfseries do}\\
\indent$|\psi\rangle := top(S)$; (*consider a reachable state*)\\
\indent$pop(S)$;\\
\indent{\bfseries for all} $\alpha \in Act$ {\bfseries do}\\
\indent\indent$|\xi\rangle := U_\alpha|\psi\rangle$; (*get a candidate state*)\\
\indent\indent$b := b \wedge (|\xi\rangle \models X)$; (*check if $X$ is satisfied*)\\
\indent\indent{\bfseries if} $b \wedge |\xi\rangle\not\in spanB$ {\bfseries then} (*check if it has not been considered*)\\
\indent\indent\indent$B := B \cup \{|\xi\rangle\}$; (*extend $R$ by adding new reachable states*)\\
\indent\indent\indent$push(|\xi\rangle, S)$;\\
\indent\indent{\bfseries fi}\\
\indent{\bfseries od}\\
{\bfseries od}\\
{\bfseries return} $b$\\

\subsection{Analyzing the Algorithm} 

First, we observe that a candidate state $|\xi\rangle\in spanB$ would not be added into $B$. So the elements in $B$ are always linear independent, and thus there are at most $d=\dim H$ elements in $B$. Furthermore, note that a state would be pushed into $S$ if and only if it has been added into $B$. Then $S$ would become empty after popping at most $d$ states. This implies that the algorithm terminates after at most $d$ iterations of the {\bfseries while} loop.

Second, it is easy to check that all elements in $B$ are always reachable. In fact, the initial states $|\psi_i\rangle$ are reachable, and if some $|\psi\rangle\in B$ is reachable, then all candidate states $|\xi\rangle=U_\alpha |\psi\rangle$ are reachable. So, if an execution of the algorithm returns false, then there must be a reachable state $|\psi_i\rangle$ or some candidate state $|\xi\rangle$ that does not satisfy $X$.  

If the output is true, then according to Lemma~\ref{commut1}, all states in $B$, further in $spanB$, satisfy $X$. Therefore, the correctness of the above algorithm comes immediately from the following:

\begin{lem} $RS(\mathbb{A})\subseteq spanB$.\end{lem}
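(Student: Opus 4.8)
The plan is to reduce everything to the characterization of $RS(\mathbb{A})$ supplied by Lemma~\ref{reachable}: since $RS(\mathbb{A})$ is the \emph{smallest} closed subspace $X$ of $H$ satisfying $I\subseteq X$ and $U_\alpha X\subseteq X$ for all $\alpha\in Act$, it suffices to show that $span\,B$, evaluated at the moment the algorithm terminates, is itself a closed subspace with these two closure properties. Because $H$ is finite-dimensional, $span\,B$ is automatically closed, so I am left to verify only that it contains $I$ and that it is invariant under every $U_\alpha$.

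The containment $I\subseteq span\,B$ is immediate: the \textbf{for} loop inserts each basis vector $|\psi_1\rangle,\dots,|\psi_k\rangle$ of $I$ into $B$ before the \textbf{while} loop starts, and since $B$ never shrinks we have $I=span\{|\psi_1\rangle,\dots,|\psi_k\rangle\}\subseteq span\,B$ at termination. The substance of the proof is therefore the invariance $U_\alpha\,span\,B\subseteq span\,B$ for every $\alpha\in Act$. First I would record two structural facts about the loop: (i) a state is pushed onto $S$ exactly when it is inserted into $B$; and (ii) in the execution to which this lemma is applied the flag $b$ remains true, so the loop exits only when $S$ has become empty. From (i) and (ii) it follows that every vector ever placed in $B$ is, at some iteration, popped as $|\psi\rangle=top(S)$ and processed. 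When $|\psi\rangle$ is processed, the loop forms $|\xi\rangle=U_\alpha|\psi\rangle$ for each $\alpha$ and either finds $|\xi\rangle\in span\,B$ already or inserts $|\xi\rangle$ into $B$; as $B$ only grows, in both cases $U_\alpha|\psi\rangle$ lies in the \emph{final} value of $span\,B$. Hence at termination $U_\alpha|\psi\rangle\in span\,B$ for every $|\psi\rangle\in B$ and every $\alpha$, and by linearity of $U_\alpha$ together with the fact that $B$ spans $span\,B$ we get $U_\alpha\,span\,B\subseteq span\,B$. Lemma~\ref{reachable} then yields $RS(\mathbb{A})\subseteq span\,B$.

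I expect the main obstacle to be the invariance step, precisely because $B$ is mutated \emph{during} the search. When $|\psi\rangle$ is popped, the membership test $|\xi\rangle\notin span\,B$ is carried out against the value of $B$ at that instant, not against its terminal value, so the argument must lean on the monotonicity of $B$ to transfer the conclusion to the terminal $span\,B$. A secondary point that deserves to be stated explicitly is \emph{why} the search reaches every direction of $RS(\mathbb{A})$ rather than halting early: this is exactly observation (ii), namely that in the run for which the lemma is needed the loop terminates with $S=\varepsilon$, so that every collected vector has had all of its $U_\alpha$-successors examined and no reachable direction is left unexplored. Once these two points are in place, the remaining verifications are the routine facts that $span\,B$ is closed (finite dimension) and that $U_\alpha$ maps a spanning set into $span\,B$ iff it maps the whole span into $span\,B$.
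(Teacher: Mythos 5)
Your proposal is correct and follows essentially the same route as the paper's own proof: both verify that $span\,B$ satisfies the two conditions of Lemma~\ref{reachable} (it contains $I$ via the initial \textbf{for} loop, and is $U_\alpha$-invariant because every vector in $B$ is eventually popped and all its candidate successors land in $span\,B$), and then invoke the minimality of $RS(\mathbb{A})$. Your treatment is somewhat more careful than the paper's --- in particular, you make explicit that the argument applies to the run in which $b$ stays true (so the loop empties $S$ and every element of $B$ gets processed) and that monotonicity of $B$ is what transfers membership in the instantaneous $span\,B$ to the terminal one --- but these are refinements of the same proof, not a different one.
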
 

\begin{proof} We only need to check that $spanB$ satisfies the conditions 1) and 2) in Lemma~\ref{reachable}. Condition 1) is satisfied as $|\psi_i\rangle\in B$ for all $1\leq i\leq k$. Note that for any $|\psi\rangle\in B$ and any $\alpha\in Act$, $U_\alpha |\psi\rangle$ was a candidate state at sometime, and then either $U_\alpha |\psi\rangle\in spanB$ or it would be added into $B$. So we always have $U_\alpha |\psi\rangle\in spanB$. Consequently, $$U_\alpha(spanB)=span(U_\alpha B)\subseteq span(spanB)=spanB$$ and condition 2) is also satisfied. \end{proof}

The algorithm is not feasible enough in practice although it has been proved to be theoretically correct as above. The reason is that different from the classical case where only a finite number of states are involved, the state space here is continuous, thus a state cannot be exactly record with a finite storage space. Then errors would be brought and accumulated during the excution, and make the result to be unstable. For example, the truth value of $|\xi\rangle\not\in spanB$ is quite sensitive to the error of $|\xi\rangle$ in our algorithm, so even a little error here may change this value and then change the excution of the algorithm a lot.

\subsection{Improving the Algorithm}

In this subsection, we show that the above algorithm can be dramatically improved whenever the unitary operator $U_\alpha$ has no degenerate eigenstates for every $\alpha\in Act$; more precisely, in this case, invariant checking of the quantum automaton $\mathbb{A}$ can be reduced to a problem of classical invariant checking.  

First, we observe that $RS(\mathbb{A})$ satisfies condition 2) in Lemma~\ref{reachable} and it can be rewritten as $U_\alpha RS(\mathbb{A})=RS(\mathbb{A})$, or equivalently, $$U_\alpha P_{RS(\mathbb{A})}=P_{RS(\mathbb{A})}U_\alpha,$$ where $P_{RS(\mathbb{A})}$ is the projection onto ${RS(\mathbb{A})}$, whenever $H$ is finite-dimensional. On the other hand, each $U_\alpha$ can be uniquely eigen-decomposed and thus has exactly $d$ eigenstates. Let $\lambda$ be an eigenvalue of $U_\alpha$ and $|\psi\rangle$ be the corresponding eigenstate. Then \begin{equation*}\begin{split}U_\alpha(P_{RS(\mathbb{A})} |\psi\rangle)&=(U_\alpha P_{RS(\mathbb{A})}) |\psi\rangle=(P_{RS(\mathbb{A})} U_\alpha) |\psi\rangle\\ &=P_{RS(\mathbb{A})}(U_\alpha |\psi\rangle)=\lambda(P_{RS(\mathbb{A})} |\psi\rangle).\end{split}\end{equation*} So, $P_{RS(\mathbb{A})} |\psi\rangle\propto |\psi\rangle$ and $${P_{RS(\mathbb{A})}}^{\perp} |\psi\rangle=|\psi\rangle-P_{RS(\mathbb{A})} |\psi\rangle\propto |\psi\rangle.$$ We have $P_{RS(\mathbb{A})} |\psi\rangle=0$ or ${P_{RS(\mathbb{A})}}^\perp |\psi\rangle=0$ since $$\langle\psi|P_{RS(\mathbb{A})}P_{RS(\mathbb{A})}^\perp|\psi\rangle=0.$$ Thus, every eigenstate of $U_\alpha$ should be in $RS(\mathbb{A})$ or in $RS(\mathbb{A})^\perp$. 

Recall that a transition systems is a $6-$tuple $$TS_C=(S_C,Act_C,\rightarrow_C,I_C,AP_C,L_C),$$ where \begin{enumerate}\item $S_C$ is a set of (classical) states; \item $Act_C$ is a set of the names of (classical) actions; \item $\rightarrow_C\subseteq S_C\times Act_C\times S_C$ is a transition relation; \item $I_C\subseteq S_C$ is a set of initial states; \item $AP_C$ is a set of (classical) atomic propositions; and \item $L_C:S_C\rightarrow 2^{AP_C}$ is a labeling function.\end{enumerate} We now construct a transition system $TS_C$ from the automaton $\mathbb{A}=(Act,\{U_\alpha|\alpha\in Act\},I)$ as follows:
\begin{enumerate}\item $S_C=\{\psi||\psi\rangle {\rm is\ an\ eigenstate\ of}\ U_{\alpha}\ {\rm for\ some}\ \alpha\in Act\}$, where each element $\psi$ in $S_C$ is regarded as the (classical) name of the corresponding quantum state $|\psi\rangle$; \item $Act_C=\{\tau\}$ consists of only one element $\tau$; \item $\rightarrow_C=\{(\psi,\tau,\phi)|\langle\psi|\phi\rangle\neq 0\}$; \item $I_C=\{\psi\in S_C||\psi\rangle\ {\rm is\ nonorthogonal\ to}\ I\}$; \item $AP_C=\{p_\psi|\ \psi\in S_C\}$, where for each $\psi\in S_C$, the atomic proposition $p_\psi$ is defined as follows: $\varphi\models p_\psi$ if and only if $\varphi=\psi$ for all $\varphi\in S_C$; and \item $L_C(\psi)=\{p_\psi\}$ for all $\psi\in S_C$.\end{enumerate} 

Next, for each closed subspace $X$ of Hilbert space $H$, we define a corresponding classical invariant property $P_{inv}$ over $AP_C$ as follows: 
$$P_{inv}=\{A_0A_1A_2...\in (2^{AP_C})^{\omega}|A_n\models \Phi\ {\rm for\ all}\ n\geq 0\}$$
where the invariant condition is 
$$\Phi=\underset{|\psi\rangle\models X}{\vee}p_\psi.$$ Furthermore, put $$R(TS_C)=\{\psi\in S_C|\psi\ {\rm is\ a\ reachable\ state\ of}\ TS_C\}.$$ Then we have: \begin{equation}\label{eqv}\begin{split}TS_C\models P_{inv}&\Leftrightarrow\forall\psi\in R(TS_C),\ \psi\models\Phi\\ &\Leftrightarrow\forall\psi\in R(TS_C),\ |\psi\rangle\models X.\end{split}\end{equation} Now we achieve our goal by showing the following:

\begin{lem} $\mathbb{A}\models invX$ if and only if $TS_C\models P_{inv}$.\end{lem}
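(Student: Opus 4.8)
The plan is to prove the equivalence by identifying the reachable states of $TS_C$ with exactly those eigenstates lying in $RS(\mathbb{A})$, and then reading off both sides through Eq.~(\ref{eqv}) together with Lemma~\ref{commut1}. Concretely, I would first establish the key identity
$$R(TS_C)=\{\psi\in S_C\mid\ |\psi\rangle\in RS(\mathbb{A})\}.$$
Recall from the discussion preceding the construction of $TS_C$ that, since each $U_\alpha$ is non-degenerate, its $d=\dim H$ eigenstates form an orthonormal basis of $H$, and each of them lies in $RS(\mathbb{A})$ or in $RS(\mathbb{A})^\perp$; a dimension count then shows that, for any fixed action $\alpha_0$, the eigenstates of $U_{\alpha_0}$ that lie in $RS(\mathbb{A})$ form an orthonormal basis of $RS(\mathbb{A})$.

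Granting the identity, both directions are immediate. If $\mathbb{A}\models invX$, then every $|\psi\rangle\in RS(\mathbb{A})$ satisfies $X$, so in particular every reachable $\psi$ of $TS_C$ has $|\psi\rangle\models X$, whence $TS_C\models P_{inv}$ by Eq.~(\ref{eqv}). Conversely, if $TS_C\models P_{inv}$, then every reachable $\psi$ satisfies $|\psi\rangle\models X$; since the identity makes the eigenbasis $\{|e_k\rangle\}$ of $RS(\mathbb{A})$ reachable, each $|e_k\rangle$ satisfies $X$, and Lemma~\ref{commut1} lifts this from the basis to all of $RS(\mathbb{A})$, giving $\mathbb{A}\models invX$.

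It then remains to prove the identity, whose two inclusions have quite different flavours. The inclusion $R(TS_C)\subseteq\{\psi\mid\ |\psi\rangle\in RS(\mathbb{A})\}$ is a routine induction on the length of a path in $TS_C$: the initial states are non-orthogonal to $I\subseteq RS(\mathbb{A})$ and hence cannot lie in $RS(\mathbb{A})^\perp$, so they lie in $RS(\mathbb{A})$; and a transition $\psi\to\phi$ requires $\langle\psi|\phi\rangle\neq 0$, so if $|\psi\rangle\in RS(\mathbb{A})$ then the eigenstate $|\phi\rangle$ cannot lie in $RS(\mathbb{A})^\perp$ and is therefore in $RS(\mathbb{A})$.

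The reverse inclusion is the main obstacle, since the transition relation of $TS_C$ only records non-orthogonality and it is not obvious that this suffices to reach every eigenstate of $RS(\mathbb{A})$. I would handle it through the auxiliary claim: for every reachable quantum state $|\chi\rangle\in R(\mathbb{A})$, every eigenstate $\psi\in S_C$ with $\langle\psi|\chi\rangle\neq 0$ belongs to $R(TS_C)$. This claim finishes the job, because any eigenstate $|\psi\rangle\in RS(\mathbb{A})$ is nonzero and hence non-orthogonal to some $|\chi\rangle\in R(\mathbb{A})$ (otherwise it would be orthogonal to $span R(\mathbb{A})=RS(\mathbb{A})$). I would prove the claim by induction on the number $m$ of steps needed to reach $|\chi\rangle$: for $m=0$ we have $|\chi\rangle\in I$, so $\langle\psi|\chi\rangle\neq 0$ forces $|\psi\rangle$ non-orthogonal to $I$ and thus $\psi\in I_C$; for the inductive step, write $|\chi\rangle=U_\alpha|\chi'\rangle$ with $|\chi'\rangle$ reachable in $m-1$ steps, and expand $|\chi'\rangle=\sum_k c_k|e_k\rangle$ in the eigenbasis $\{|e_k\rangle\}$ of the last-applied unitary $U_\alpha$, so that $|\chi\rangle=\sum_k\lambda_k c_k|e_k\rangle$ has exactly the same eigenbasis support. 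Since $\langle\psi|\chi\rangle\neq 0$, some index $k$ satisfies both $c_k\neq 0$ and $\langle\psi|e_k\rangle\neq 0$; the former makes $e_k$ non-orthogonal to the reachable state $|\chi'\rangle$, so $e_k\in R(TS_C)$ by the induction hypothesis, and the latter supplies a transition $e_k\to\psi$, placing $\psi$ in $R(TS_C)$. The crux, and the reason the construction of $TS_C$ works at all, is precisely that applying $U_\alpha$ leaves invariant the set of its own eigenstates occurring in a superposition, so that the eigenstates of the last action serve as classical stepping stones connecting a general reachable superposition to any eigenstate it overlaps.
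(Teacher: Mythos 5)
Your proof is correct, but for the crucial direction it follows a genuinely different route from the paper. Both arguments share the easy half (the induction showing that every state reachable in $TS_C$ is an eigenstate lying in $RS(\mathbb{A})$, since non-orthogonality to $RS(\mathbb{A})$ rules out membership in $RS(\mathbb{A})^\perp$), but they diverge on the hard half. The paper never reasons about individual quantum paths: it introduces the subspace $RS(TS_C)=span\{|\psi\rangle\mid\psi\in R(TS_C)\}$ and proves $RS(\mathbb{A})\subseteq RS(TS_C)$ structurally, by checking that $RS(TS_C)$ contains $I$ and is invariant under every $U_\alpha$ (via the observation that $RS(TS_C)^{\perp}=span\{|\psi\rangle\mid\psi\in S_C\setminus R(TS_C)\}$ together with a dimension count on the eigenstates of each $U_\alpha$), and then invoking the minimality characterization of $RS(\mathbb{A})$ in Lemma 2.1. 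You instead prove the sharper, state-level identity $R(TS_C)=\{\psi\in S_C\mid |\psi\rangle\in RS(\mathbb{A})\}$ by induction on the number of steps needed to reach a quantum state, using the eigenbasis of the last-applied unitary as classical stepping stones: since $U_\alpha$ preserves the support of a state in its own eigenbasis, any eigenstate overlapping $U_\alpha|\chi'\rangle$ is connected in $TS_C$, through one such basis vector, to states already known reachable. Your argument buys an explicit dynamical explanation of \emph{why} classical reachability in $TS_C$ tracks quantum reachability, and it yields the exact set $R(TS_C)$ rather than only its span; the paper's argument buys brevity and reuse of existing machinery, replacing path induction by the lattice-theoretic minimality of $RS(\mathbb{A})$. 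The two are equivalent in strength here (your identity and the paper's equality $RS(\mathbb{A})=RS(TS_C)$ each imply the other, given the orthogonality structure of $S_C$), and both rely on the same standing hypotheses: non-degeneracy of each $U_\alpha$, finite-dimensionality of $H$, and the commutativity conditions of Lemma 3.2 needed to pass from a basis of $RS(\mathbb{A})$ to all of it.
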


\begin{proof} Let $RS(TS_C)$ be the subspace of $H$ spanned by the states $|\psi\rangle$ such that $\psi$ is reachable in $TS_C$; that is,  
$$RS(TS_C)=span\{|\psi\rangle|\psi\in R(TS_C)\}.$$
We have seen that $\mathbb{A}\models invX$ if and only if $|\psi\rangle\models X$ for all $|\psi\rangle\in RS(\mathbb{A})$.
Therefore, according to Lemma~\ref{commut1} and Eq.~(\ref{eqv}), we only need to show that $RS(\mathbb{A})=RS(TS)$. 

First, We demostrate that $RS(TS_C)\subseteq RS(\mathbb{A})$. If $|\psi\rangle\in R(TS_C)$, then $|\psi\rangle$ is a eigenstate of some $U_\alpha$ and thus is either in $RS(\mathbb{A})$ or in $RS(\mathbb{A})^\perp$. To show that 
$|\psi\rangle\in RS(\mathbb{A})$, 
we only need to prove that it is nonorthogonal to $RS(\mathbb{A})$. This can be done by an induction. For any $\psi\in I_C$, $|\psi\rangle$ is nonorthogonal to $I$ and thus is nonorthogonal to $RS(\mathbb{A})$. If $|\psi^{\prime}\rangle\in RS(\mathbb{A})$ and $\psi$ is a successor of $\psi^{\prime}$ in $TS_C$, then it holds that $\langle\psi^{\prime}|\psi\rangle\neq 0$, and $|\psi\rangle$ is nonorthogonal to $RS(\mathbb{A})$.

Second, we prove that $RS(\mathbb{A})\subseteq RS(TS_C)$. It suffices to verify that $RS(TS_C)$ satisfies the two conditions in Lemma~\ref{reachable}. We observe that $\langle\psi|\phi\rangle=0$ for any $\psi\in R(TS_C)$ and for any $\phi\in S_C\setminus R(TS_C)$, and $span\{|\psi\rangle|\psi\in S_C\}=H.$ Therefore, $$RS(TS_C)^\perp=span\{|\psi\rangle|\psi\in S_C\setminus R(TS_C)\}.$$ Notice that $\psi\perp I$ for all $\psi\in S_C\setminus R(TS_C)$. Thus, $I\perp RS(TS_C)^\perp$, and $I\subseteq RS(TS_C)$. So the condition 1) in Lemma~\ref{reachable} is satisfied. On the other hand, for any $\alpha\in Act$, assume that $|\psi_{\alpha 1}\rangle,|\psi_{\alpha 2}\rangle,\cdots,|\psi_{\alpha d}\rangle$ are the all eigenstates of $U_\alpha$, where the first $r$ states are in $R(TS_C)$ and the other $d-r$ ones are in $RS(TS_C)^\perp$. Since these $d$ states are pairwise orthogonal, we have $r\leq \dim RS(TS_C)$ and $$d-r\leq \dim RS(TS_C)^\perp=d-\dim RS(TS_C).$$ Thus, $r=\dim RS(TS_C)$. It means that $\{|\psi_{\alpha 1}\rangle,|\psi_{\alpha 2}\rangle,\cdots,|\psi_{\alpha r}\rangle\}$ is a basis of $RS(TS_C)$. Now, for any $|\psi\rangle\in RS(TS_C)$, let $$|\psi\rangle=\sum_{i\leq r}\mu_i|\psi_{\alpha i}\rangle.$$ We have $$U_\alpha|\psi\rangle=\sum_{i\leq r}\mu_i\lambda_{\alpha i}|\psi_{\alpha i}\rangle\in RS(TS_C),$$ where $\lambda_{\alpha i}$ is the corresponding eigenvalue of $|\psi_{\alpha i}\rangle$. Therefore, $U_\alpha RS(TS_C)\subseteq RS(TS_C)$, and the condition 2) in Lemma~\ref{reachable} is also satisfied.\end{proof}  

The above lemma allows us to adopt the algorithms for checking invariants of (classical) transition systems, e.g. Algorithms 3 and 4 presented in~\cite{BK08}, pages 109 and 110, to check invariants of quantum automata in which all unitary operators have no degenerate eigenstates.  

\section{Model Checking Reversible Safety Properties}\label{reduce}

One of the major techniques for verification of linear-time properties is automata-based model-checking~\cite{VW94, KV01}. This 
approach can reduce the problem of verifying a large class of linear-time properties to checking some specific properties for which algorithms are known. This section generalizes it to the quantum setting and establishes a reduction from verifying regular safety properties of quantum automata to checking their invariants, for which an algorithm was given in the last section. In this section and the next, we always assume that the Hilbert space $H$ is finite-dimensional. 

\subsection{Reversible Automata} 

The key idea of automata-based model-checking is to combine the system under consideration with an automaton that recognizes the property to be checked. Since the evolution of (closed) quantum systems is essentially reversible, it is reasonable to employ reversible automata in model-checking quantum systems.  

Recall that a nondeterministic finite automaton (an NFA for short) is a
quintuple
$$\mathcal{A}=(Q,\Sigma,\{\stackrel{A}{\rightarrow}|A\in\Sigma\},Q_0,F),$$
where $Q$ is a finite set of states, $\Sigma$ is an alphabet of
input symbols, $\stackrel{A}{\rightarrow}\ \subseteq Q\times Q$ is a
transition relation for each $A\in\Sigma$, $Q_0\subseteq Q$ is the
set of initial states, and $F\subseteq Q$ is the set of final
states. A word $w$ over alphabet $\Sigma$ is a finite string of
elements of $\Sigma$, i.e.
$$w\in\Sigma^{\ast}=\bigcup_{n=0}^{\infty}\Sigma^{n}.$$ A language
over $\Sigma$ is a subset of $\Sigma^{\ast}$. A word
$w=A_1A_2...A_n$ is accepted by $\mathcal{A}$ if there are $q_0\in
Q_0,q_1,...,q_{n-1}\in Q$ and $q_n\in F$ such that
$$q_0\stackrel{A_1}{\rightarrow}q_1\stackrel{A_2}{\rightarrow}...q_{n-1}\stackrel{A_n}{\rightarrow}q_n.$$
The language $L(\mathcal{A})$ accepted by $\mathcal{A}$ is defined
to be the set of the words accepted by $\mathcal{A}$. A language
over $\Sigma$ is called regular if it can be accepted by an NFA.

An NFA is called a deterministic finite automaton (DFA for short) if
$Q_0$ is a singleton and there are no pairs of transitions of the
form $q\stackrel{A}{\rightarrow}q_1$ and
$q\stackrel{A}{\rightarrow}q_2$ with $q_1\neq q_2$. Dually, an NFA
is said to be co-deterministic if $F$ is a singleton and there are
no pairs of transitions $q_1\stackrel{A}{\rightarrow}q$ and
$q_2\stackrel{A}{\rightarrow}q$ with $q_1\neq q_2$.

Reversible automata and the languages accepted by them have been thoroughly studied in~\cite{P87}, \cite{P01}. Here, we only recall the definition of reversible definition for convenience of the reader.

\begin{defn}An NFA
$\mathcal{A}=(Q,\Sigma,\{\stackrel{A}{\rightarrow}|A\in\Sigma\},Q_0,F)$
is said to be reversible if there are no pairs of
transitions of the form $q\stackrel{A}{\rightarrow}q_1$ and
$q\stackrel{A}{\rightarrow}q_2$ with $q_1\neq q_2$, and there are no
pairs of transitions $q_1\stackrel{A}{\rightarrow}q$ and
$q_2\stackrel{A}{\rightarrow}q$ with $q_1\neq q_2$.\end{defn}

\subsection{Products of Quantum Automata and Reversible Automata}

Let $\mathbb{A}=(Act,\{U_\alpha|\alpha\in Act\},I)$ be quantum
automaton in Hilbert space $H$. We can choose an orthonormal basis
of $I$ and then expand it to an orthonormal basis
$\{|\psi_i\rangle\}$ of $H$; in other words, we can choose an
orthnormal basis $\{|\psi_i\rangle\}$ of $H$ so that
$\{|\psi_i\rangle||\psi_i\rangle\in I\}$ is an orthnormal basis of
$I$. On the other hand, let $AP\subseteq S(H)$ be a finite set of
atomic propositions, and let $\Sigma=2^{AP}$. Suppose that
$\mathcal{A}=(Q,\Sigma,\{\stackrel{A}{\rightarrow}|A\in\Sigma\},Q_0,F)$
be a co-deterministic finite state automaton. It is asumed that
$Q_0\cap F=\emptyset$.
For each $A\in\Sigma=2^{AP}$ and for each $q\in Q$, we write
$$succ(q,A)=\{q^{\prime}\in Q|q\stackrel{A}{\rightarrow}q^{\prime}\
{\rm in}\ \mathcal{A}\}.$$ Then both $succ(q,A)=\emptyset$  and
$|succ(q,A)|\geq 1$ are possible. Whenever $succ(q,A)\neq\emptyset$,
we can choose an element $q_0^{\prime}\in succ(q,A)$. In particular,
for the case of $succ(q,A)\cap F\neq\emptyset$, we always choose
$q_0^{\prime}\in F$. Then we define $\delta(q,A)=q_0^{\prime}$. For
the case of $succ(q,A)=\emptyset$, $\delta(q,A)$ is undefined. Thus,
we define a partial function: $\delta:Q\times\Sigma\rightarrow Q$.

We write $$H_Q=span\{|q\rangle|q\in Q\}$$ for the Hilbert space with
$\{|q\rangle|q\in Q\}$ as its orthonormal basis. For each $n$, we
put $$Q_i=\{q\in Q|succ(q,L(|\psi_i\rangle))\neq\emptyset\}.$$ Since
$\mathcal{A}$ is co-deterministic, we have
$$|\{\delta(q,L(|\psi_i\rangle))|q\in Q_i\}|=|Q_i|.$$ Thus, there is a
bijection $$\kappa:Q\setminus Q_i\rightarrow
Q\setminus\{\delta(q,L(|\psi_i\rangle))|q\in Q_i\}.$$ For each
$\alpha\in Act$, we can define linear operator $V_\alpha$ on Hilbert
space $H\otimes H_Q$ as follows:
$$V_\alpha(|\psi_i\rangle|q\rangle)=\begin{cases}(U_\alpha
|\psi_i\rangle)|\delta(q,L(U_\alpha |\psi_i\rangle))\rangle\\ \ \ \ \ \ \ \ \ \ \ \ \ \ \ \ \ {\rm if}\ succ(q,L(|\psi_i\rangle))\neq\emptyset,\\
(U_\alpha |\psi_i\rangle)|\kappa(q)\rangle\ \ \ \ \ {\rm
otherwise}\end{cases}$$ for all $i$ and for all $q\in Q$. It is easy
to verify that $V_\alpha$ is a unitary operator by the assumption
that $\mathcal{A}$ is co-deterministic.

\begin{defn}\label{prod}The product of $\mathbb{A}$ and (a profile of) $\mathcal{A}$ is defined to be
the quantum automaton
$$\mathbb{A}\otimes\mathcal{A}=(Act,\{V_\alpha|\alpha\in
Act\},\mathbb{I})$$ in Hilbert space $H\otimes H_Q$, where
\begin{equation*}\begin{split}\mathbb{I}=span \{&|\psi_i\rangle|q\rangle|{\rm basis\ state}\ |\psi_i\rangle\in I\\ &{\rm
and}\ q_0\stackrel{L(|\psi_i\rangle)}{\rightarrow}q\ {\rm in}\
\mathcal{A}\ {\rm for\ some}\ q_0\in Q_0\}\end{split}
\end{equation*} is a closed subspace of $H\otimes H_Q$.\end{defn}

\subsection{Reversible Safety Properties}

Now let $P$ be a safety property over $AP$. Then the set $BPref(P)$
of bad prefixes of $P$ is a language over alphabet $\Sigma=2^{AP}$.
If it is a regular language, then $P$ is called a regular safety
property. For a regular safety property $P$, there exists an NFA
accepting $BPref(P)$. The subsets construction in automata theory
shows that $BPref(P)$ can be accepted by a DFA. By removing all
outgoing transitions from the final states we then obtain a DFA that
accepts $MBPref(P)$. So, $MBPref(P)$ is also a regular language over
alphabet $\Sigma=2^{AP}$. Furthermore, note that regular languages
are closed under reversal. So, there is also a co-deterministic
finite automaton $\mathcal{A}$ such that $L(\mathcal{A})=MBPref(P)$.
Note that for the case that the empty word is in $MBPref(P)$ we have
$P=\emptyset$. In what follows we simply exclude this trivial case.
Then it always holds that $Q_0\cap F=\emptyset$.

Our aim is to give a characterization of satisfaction relation
between quantum machines and regular safety properties in terms of
invariants. We choose the following set $\mathbb{AP}$ of atomic propositions on
$H\otimes H_Q$: $$\mathbb{AP}=\{H\otimes span\{|q\rangle|q\in
R\}|\emptyset\neq R\subseteq Q\}.$$ It is easy to see that $\mathbb{AP}$ satisfies the commutativity condition in Lemmas~\ref{commut1},~\ref{commut} and~\ref{l-pers}. The commutativity of $\mathbb{AP}$ is necessary for the main results in this section. First, we have the following:

\begin{prop}\label{half} Suppose that $P$ is a regular safety property and co-deterministic automaton $\mathcal{A}$ accepts
$MBPref(P)$. If $\mathbb{A}\models P$ then
\begin{equation}\label{prodx1}\mathbb{A}\otimes\mathcal{A}\models inv(H\otimes
span\{|q\rangle|q\in Q\setminus F\}).\end{equation}
\end{prop}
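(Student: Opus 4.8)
The plan is to reduce the asserted invariant to a single subspace containment and then to trace that containment back to the absence of bad prefixes in $\mathbb{A}$. First I would record a convenient reformulation of satisfaction for the chosen propositions. For a state $|\Psi\rangle\in H\otimes H_Q$, write $|\Psi\rangle=\sum_{q\in Q}|\phi_q\rangle|q\rangle$ and set $supp(|\Psi\rangle)=\{q: |\phi_q\rangle\neq 0\}$. Since every element of $\mathbb{AP}$ has the form $H\otimes span\{|q\rangle: q\in R\}$, one checks that $|\Psi\rangle$ lies in such a subspace exactly when $supp(|\Psi\rangle)\subseteq R$, so $L(|\Psi\rangle)=\{H\otimes span\{|q\rangle:q\in R\}: supp(|\Psi\rangle)\subseteq R\}$ and hence $\bigcap_{Y\in L(|\Psi\rangle)}Y=H\otimes span\{|q\rangle: q\in supp(|\Psi\rangle)\}$. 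Writing $X=H\otimes span\{|q\rangle:q\in Q\setminus F\}$, this shows $|\Psi\rangle\models X$ holds precisely when $supp(|\Psi\rangle)\cap F=\emptyset$, i.e. precisely when $|\Psi\rangle\in X$. Thus the target invariant $\mathbb{A}\otimes\mathcal{A}\models inv X$ is equivalent to the containment $RS(\mathbb{A}\otimes\mathcal{A})\subseteq X$.

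Next I would attack this containment through Lemma~\ref{reachable}, which presents $RS(\mathbb{A}\otimes\mathcal{A})$ as the smallest $V_\alpha$-invariant closed subspace containing $\mathbb{I}$; since the commutativity of $\mathbb{AP}$ was already noted, Lemma~\ref{commut} lets me instead follow the states reachable from the basis generators $|\psi_i\rangle|q\rangle$ of $\mathbb{I}$ and show that no final automaton index ever enters their support. The base case is $\mathbb{I}\subseteq X$: a basis generator carries an index $q$ reached from some $q_0\in Q_0$ by a single transition reading $L(|\psi_i\rangle)$, and because $\mathbb{A}\models P$ the length-one finite trace $L(|\psi_i\rangle)\in Traces_{fin}(\mathbb{A})$ is not a bad prefix (Lemma~\ref{bad}); as $\mathcal{A}$ accepts $MBPref(P)$ and $Q_0\cap F=\emptyset$, no such $q$ can be final.

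The heart of the argument is to show that a final index can never appear in the support of a state reachable from $\mathbb{I}$. The intended mechanism is the classical one: the $H_Q$-register of $\mathbb{A}\otimes\mathcal{A}$ is meant to record the run of the co-deterministic automaton $\mathcal{A}$, started in $Q_0$, on the labels generated along a path of $\mathbb{A}$; should that run ever reach $F$, the word read so far is accepted by $\mathcal{A}$, hence lies in $MBPref(P)\subseteq BPref(P)$, and if that word is a genuine element of $Traces_{fin}(\mathbb{A})$ we contradict $\mathbb{A}\models P$ via Lemma~\ref{bad}. I would organize this as an induction on the number of applications of the $V_\alpha$: assuming that each basis state $|\psi_i\rangle|q\rangle$ in the support of a reachable $|\Psi\rangle$ carries a non-final $q$ together with a witnessing finite trace $u$ of $\mathbb{A}$ with $\widehat\delta(q_0,u)=q$ for some $q_0\in Q_0$, I would verify that each image $V_\alpha(|\psi_i\rangle|q\rangle)=(U_\alpha|\psi_i\rangle)|\delta(q,\cdot)\rangle$ (or $(U_\alpha|\psi_i\rangle)|\kappa(q)\rangle$) extends $u$ by one label while keeping the new index out of $F$, using that $\delta$ prefers a final successor whenever one is available and that $\kappa$ merely reshuffles the indices possessing no $\mathcal{A}$-transition.

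The main obstacle, and the step demanding the most care, is exactly this bookkeeping under quantum superposition. Because $U_\alpha|\psi_i\rangle$ is in general a nontrivial superposition, $V_\alpha$ entangles the $H_Q$-register with the system register, and along each resulting branch the register is advanced by the label $L(U_\alpha|\psi_i\rangle)$ attached to the image of a single \emph{basis} component, not by the label of the actual superposed path state. The delicate task is therefore to certify that the label word recorded in each branch is still an honest member of $Traces_{fin}(\mathbb{A})$, so that $\mathbb{A}\models P$ can legitimately be invoked, rather than an artifact of decomposing intermediate states into the basis; carrying this out, by exploiting the co-determinism of $\mathcal{A}$, the preference rule built into $\delta$, and the way $\mathbb{I}$ seeds the automaton register, is where the real work of the proof lies. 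Once this is secured the induction closes, $RS(\mathbb{A}\otimes\mathcal{A})\subseteq X$ follows, and combined with the reduction of the first paragraph the Proposition is established.
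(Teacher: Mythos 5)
Your frame is correct and is essentially the paper's: the reformulation in your first paragraph (for propositions in $\mathbb{AP}$, $|\Psi\rangle\models X$ holds iff no $q\in F$ appears in the support of $|\Psi\rangle$, iff $|\Psi\rangle\in X$) is valid, the reduction via Lemma~\ref{commut} to states reachable from the basis generators $|\psi_i\rangle|q\rangle$ of $\mathbb{I}$ is exactly how the paper begins, and your base case $\mathbb{I}\subseteq X$ is the paper's argument specialized to paths of length zero. The problem is that your proof stops exactly where it must not: the inductive step --- that applying $V_\alpha$ to a reachable state keeps every register index out of $F$, each index being witnessed by a genuine word in $Traces_{fin}(\mathbb{A})$ --- is stated as an intention and then explicitly deferred (``is where the real work of the proof lies''). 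Naming the hard step is not the same as carrying it out; as submitted, the proposal does not establish the Proposition, since without that step nothing rules out a final index entering the support after two or more transitions.

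For comparison, the paper closes this step in one move. Given a path $|\psi_i\rangle|q\rangle\stackrel{V_{\alpha_1}}{\rightarrow}|\xi_1\rangle\stackrel{V_{\alpha_2}}{\rightarrow}\cdots\stackrel{V_{\alpha_k}}{\rightarrow}|\xi_k\rangle$, it reads off from the definition of $V_\alpha$ that each $|\xi_j\rangle$ is the \emph{product} state $(U_{\alpha_j}\cdots U_{\alpha_1}|\psi_i\rangle)|q_j\rangle$, the register following the $\delta$-run labeled by $L(U_{\alpha_j}\cdots U_{\alpha_1}|\psi_i\rangle)$. Hence the word accumulated in the register is literally $L(\widehat{\pi})$ for the path fragment $\widehat{\pi}=|\psi_i\rangle(U_{\alpha_1}|\psi_i\rangle)\cdots(U_{\alpha_k}\cdots U_{\alpha_1}|\psi_i\rangle)$ of $\mathbb{A}$; since $\mathbb{A}\models P$, Lemma~\ref{bad} gives $L(\widehat{\pi})\notin MBPref(P)=L(\mathcal{A})$, so $q_k\notin F$ and $|\xi_k\rangle\in H\otimes span\{|q\rangle|q\in Q\setminus F\}$. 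No support-tracking induction, no analysis of the $\kappa$-branch, and no entanglement bookkeeping appear anywhere: the paper treats the preservation of product form --- with the register driven by the labels of the actual path states rather than by the labels of their basis components --- as immediate from the definition of $V_\alpha$. Your fourth paragraph is, in effect, the observation that under the strictly linear extension of $V_\alpha$ this preservation is not automatic once $U_{\alpha_1}|\psi_i\rangle$ is a superposition of basis vectors; that is a fair criticism of the construction, but your proposal neither reproduces the paper's one-line dispatch of the issue nor supplies the substitute argument it promises, and that missing argument is precisely the content of the Proposition.
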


\begin{proof} It is easy to see that the set $\mathbb{AP}$ of
atomic propositions in $H\otimes H_Q$ satisfies conditions 1) and 2)
in Lemma~\ref{commut1}. We assume that $\mathbb{A}\models P$ and want
to show Eq.~(\ref{prodx1}). By Lemma~\ref{commut} and the
definition of $\mathbb{I}$ it suffices to show that for any basis
state $|\psi_i\rangle\in I$ and for any $q\in Q$ with
$$q_0\stackrel{L(|\psi_i\rangle)}{\rightarrow}q$$ for some $q\in Q_0$,
if $|\xi\rangle\in H\otimes H_Q$ is reachable from
$|\psi_i\rangle|q\rangle$, then $$|\xi\rangle\models H\otimes
span\{|q\rangle|q\in Q-F\}.$$ Suppose that
$$|\psi_i\rangle|q\rangle\stackrel{V_{\alpha_1}}{\rightarrow}|\xi_1\rangle\stackrel{V_{\alpha_2}}{\rightarrow}
...\stackrel{V_{\alpha_k}}{\rightarrow}|\xi_k\rangle=|\xi\rangle$$
for some $\alpha_1,\alpha_2,...,\alpha_k\in Act$. By the definition
of $V_\alpha$'s we obtain:\begin{equation*}\begin{split}
&|\xi_1\rangle=(U_{\alpha_1}|\psi_i\rangle)|q_1\rangle,\\ & \ \ \ \ \ \ \ \ \ \ \ \ \ \ \
\ \ \ \ \ \ \ \ \ \ \ \ \ \ \ \ \ \ \ \ \ \ \ 
q\stackrel{L(U_{\alpha_1}|\psi_i\rangle)}{\rightarrow}|q_1\rangle,\\
&|\xi_2\rangle=(U_{\alpha_2}U_{\alpha_1}|\psi_i\rangle)|q_2\rangle,\\ & \ \ \ \ \ \ \ \ \ \ \ \ \ \ \
\ \ \ \ \ \ \ \ \ \ \ \ \ \ \ \ \ \ \ \ \ \ \ 
q_1\stackrel{L(U_{\alpha_2}U_{\alpha_1}|\psi_i\rangle)}{\rightarrow}|q_2\rangle, \\
&\ \ \ \ \ \ \ \ \ \ \ \ \ \ \ \ \ \ \ \ \cdots \cdots \cdots \cdots\\
&|\xi_k\rangle=(U_{\alpha_k}...U_{\alpha_1}|\psi_i\rangle)|q_2\rangle,\\ & \ \ \ \ \ \ \ \ \ \ \ \ \ \ \
\ \ \ \ \ \ \ \ \ \ \ \ \ \ \ \ \ \ \ \ \ \ \ 
q_{k-1}\stackrel{L(U_{\alpha_k}...U_{\alpha_1}|\psi_i\rangle)}{\rightarrow}|q_k\rangle.
\end{split}\end{equation*}Then we have: 
$$\widehat{\pi}=|\psi_i\rangle(U_{\alpha_1}|\psi_i\rangle)...(U_{\alpha_k}...U_{\alpha_1}|\psi_i\rangle)$$
is a path fragment in $\mathbb{A}$. Since $|\psi_i\rangle\in I$, we
obtain:
$$\widehat{\sigma}=L(|\psi_i\rangle)L(U_{\alpha_1|\psi_i\rangle})...L(U_{\alpha_k}...U_{\alpha_1}|\psi_i\rangle)\in
Traces_{fin}(\mathbb{A}).$$ It follows from Lemma~\ref{bad} that
$$Traces_{fin}(\mathbb{A})\cap MBPref(P)=\emptyset$$ because
$\mathbb{A}\models P$. Thus, $\widehat{\sigma}\not\in
MBPref(P)=L(\mathcal{A})$ and $q_k\not\in F$. Consequently, it holds
that
$$|\xi_k\rangle=(U_{\alpha_k}...U_{\alpha_1}|\psi_i\rangle)|q_k\rangle\in
H\otimes span\{|q\rangle|q\in Q\setminus F\}\in \mathbb{AP}$$ and
$$|\xi\rangle=|\xi_k\rangle\models H\otimes span\{|q\rangle|q\in
Q\setminus F\}.$$\end{proof}

It is easy to see that in general the inverse of the above
proposition is incorrect. However, it holds for the safety
properties whose bad prefixes accepted by reversible
automata~\cite{P01}. 

\begin{defn}A safety property $P$ is said to be reversible
if $MBPref(P)$ is accepted by a reversible automaton $\mathcal{A}$.\end{defn}

Now we are ready to present one of the main results in this paper.

\begin{thm}\label{main1}If $P$ is a reversible safety property and $\mathcal{A}$ a reversible automaton with
$\mathcal{L}(\mathcal{A})=MBPref(P)$, then $\mathbb{A}\models P$ if and only if
\begin{equation}\label{prodx2}\mathbb{A}\otimes\mathcal{A}\models inv (H\otimes
span\{|q\rangle|q\in Q\setminus F\}).\end{equation}\end{thm}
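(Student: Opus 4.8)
The statement is a biconditional, and the two implications call for rather different arguments. The plan is to obtain the \textquotedblleft only if\textquotedblright\ direction essentially for free and to concentrate the real work on the \textquotedblleft if\textquotedblright\ direction, where reversibility (as opposed to mere co-determinism) is indispensable. For the forward implication I would simply note that a reversible automaton is in particular co-deterministic, so the hypotheses of Proposition~\ref{half} are met and its conclusion Eq.~(\ref{prodx1}) is verbatim Eq.~(\ref{prodx2}); hence $\mathbb{A}\models P$ already yields $\mathbb{A}\otimes\mathcal{A}\models inv(H\otimes span\{|q\rangle\mid q\in Q\setminus F\})$, and nothing beyond Proposition~\ref{half} is needed.

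For the converse I would argue by contraposition, showing that $\mathbb{A}\not\models P$ forces the invariant to fail. By Lemma~\ref{bad} (with $BPref(P)$ replaced by $MBPref(P)$) there is a path fragment $\widehat{\pi}=|\phi_0\rangle\cdots|\phi_n\rangle$ of $\mathbb{A}$ whose trace $\widehat{\sigma}=L(\widehat{\pi})$ lies in $MBPref(P)=L(\mathcal{A})$, so $\widehat{\sigma}$ is accepted by $\mathcal{A}$. The crucial observation is that, since $\mathcal{A}$ is reversible and hence deterministic, for every letter $A$ the set $succ(q,A)$ has at most one element, so the successor selected by $\delta$ in the construction of the operators $V_\alpha$ is forced and the \textquotedblleft prefer a final state\textquotedblright\ clause is vacuous. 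Consequently the run of $\mathcal{A}$ read off by the second ($H_Q$) component of $\mathbb{A}\otimes\mathcal{A}$ coincides with the unique accepting run of $\mathcal{A}$ on $\widehat{\sigma}$. Mirroring the step-by-step computation inside the proof of Proposition~\ref{half}, I would then exhibit a reachable state of the product having nonzero overlap with $H\otimes span\{|q\rangle\mid q\in F\}$; such a state does \emph{not} satisfy $H\otimes span\{|q\rangle\mid q\in Q\setminus F\}$, so the invariant is refuted. The passage from the continuum of initial states of $\mathbb{A}\otimes\mathcal{A}$ to the basis states $|\psi_i\rangle|q\rangle$ spanning $\mathbb{I}$ is supplied by Lemma~\ref{commut}, which applies because $\mathbb{AP}$ is commutative and closed under join.

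The hard part will be the determinism step, and this is precisely the feature that separates the present theorem from Proposition~\ref{half}, whose converse (as remarked after that proposition) fails for general co-deterministic $\mathcal{A}$: with only co-determinism the partial function $\delta$ could pick a non-accepting successor and the bad prefix would escape detection, whereas determinism removes the choice entirely and pins $\delta$ to the genuine accepting run. A secondary but delicate piece of bookkeeping that I expect to require real care is reconciling the superposition structure of the first ($H$) component with the basis-state definition of $V_\alpha$: because $U_\alpha|\psi_i\rangle$ is in general a superposition $\sum_j c_j|\psi_j\rangle$, the next application of some $V_\beta$ forces a re-expansion in the basis $\{|\psi_j\rangle\}$, and one must argue that the tracked automaton component stays well defined along the reachable states. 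It is here that I would lean on determinism together with the commutativity hypothesis on $\mathbb{AP}$ and Lemma~\ref{commut1} to keep the reduction to basis states sound and to transfer the acceptance of $\widehat{\sigma}$ faithfully into a violation of the product invariant.
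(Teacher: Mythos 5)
Your proposal is correct and takes essentially the same route as the paper: the forward direction is obtained verbatim from Proposition~\ref{half} (a reversible automaton being in particular co-deterministic), and the converse is the paper's own refutation argument --- Lemma~\ref{bad} produces a path fragment of $\mathbb{A}$ whose trace lies in $MBPref(P)=L(\mathcal{A})$, forward determinism of the reversible $\mathcal{A}$ forces $\delta$ to reproduce the accepting run (i.e.\ $q_{j+1}=\delta(q_j,L(|\varphi_{j+1}\rangle))$), and the lifted path in $\mathbb{A}\otimes\mathcal{A}$ ends in a reachable state $|\varphi_n\rangle|q_n\rangle$ with $q_n\in F$ whose label intersection $H\otimes span\{|q_n\rangle\}$ is not contained in $H\otimes span\{|q\rangle|q\in Q\setminus F\}$, refuting the invariant. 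The bookkeeping subtleties you flag (re-expansion of superpositions under $V_\alpha$, reduction to basis initial states via Lemma~\ref{commut}) are not actually confronted in the paper's proof either, which works directly with the single lifted path from $|\varphi_0\rangle|q_0\rangle\in\mathbb{I}$.
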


\begin{proof} With Proposition~\ref{half}, we only need to show
that Eq.~(\ref{prodx2}) implies $\mathbb{A}\models P$.
This can be done by refutation. If $\mathbb{A}\models P$ does not
hold, then it follows from Lemma~\ref{bad} that
$$Traces_{fin}(\mathbb{A})\cap MBPref(P)\neq\emptyset.$$ Then there is
a path fragment
$\widehat{\pi}=|\psi_0\rangle|\varphi_1\rangle...|\varphi_k\rangle$
in $\mathbb{A}$ such that $|\varphi_0\rangle\in I$ and
\begin{equation*}\begin{split}\widehat{\sigma}=L(\widehat{\pi})=L(|\psi_0\rangle)L(|\varphi_1\rangle)...L(|\varphi_n\rangle)\in
MBPref(P)=L(\mathcal{A}).\end{split}\end{equation*} First, there are
$\alpha_1,...,\alpha_n\in Act$ such that
$|\varphi_{j+1}\rangle=U_{\alpha_{j+1}}|\varphi_j\rangle$ for
$j=0,1,...,n-1$. Secondly, by definition there are
$q_{-1},q_0,q_1,...,q_n\in Q$ such that $q_{-1}\in Q_0$, $q_n\in F$
and the transitions
$$q_{-1}\stackrel{L(|\varphi_0\rangle)}{\rightarrow}q_0\stackrel{L(|\varphi_1\rangle)}{\rightarrow}q_1\cdot\cdot\cdot
\stackrel{L(|\varphi_n\rangle)}{\rightarrow}q_n$$ hold in
$\mathcal{A}$. Since $\mathcal{A}$ is reversible, we obtain:
$$q_{j+1}=\delta(q_j,L(|\psi_{j+1}\rangle))$$ for $j=-1,0,1,...,n$.
Therefore, $|\varphi_0\rangle|q_0\rangle\in\mathbb{I}$ and we have
$$|\varphi_0\rangle|q_0\rangle\stackrel{V_{\alpha_1}}{\rightarrow}|\varphi_1
\rangle|q_1\rangle\stackrel{V_{\alpha_2}}{\rightarrow}...\stackrel{V_{\alpha_{n-1}}}{\rightarrow}
|\varphi_{n-1}\rangle|q_{n-1}\rangle
\stackrel{V_{\alpha_n}}{\rightarrow}|\varphi_n\rangle|q_n\rangle$$
in $\mathbb{A}\otimes\mathcal{A}$. So,
$|\varphi_n\rangle|q_n\rangle$ is reachable from
$|\varphi_0\rangle|q_0\rangle$. However,
$$L(|\varphi_n\rangle|q_n\rangle)=\{H\otimes span\{|q\rangle|q\in
R\}|q_n\in R\subseteq Q\}$$ and
\begin{equation*}\begin{split}\bigcap_{Y\in
L(|\varphi_n\rangle|q_n\rangle)}Y&=H\otimes span\{|q_n\rangle\}\\
&\not\subseteq H\otimes span\{|q\rangle|q\in Q\setminus
F\}\end{split}\end{equation*} because $q_k\in F$. This means that
$$|\varphi_n\rangle|q_n\rangle\not\models H\otimes
span\{|q\rangle|q\in Q\setminus F\}.$$ Consequently,
$$\mathbb{A}\otimes\mathcal{A}\not\models inv(H\otimes
span\{|q\rangle|q\in Q\setminus F\}).$$\end{proof}

The above theorem reduces the problem of checking a reversible safety property for the quantum automaton $\mathbb{A}$ to checking an invariant for the quantum automaton $\mathbb{A}\otimes\mathcal{A}$, for which an algorithms was already given in Sec.~\ref{algo}. 

\section{Model-Checking $\omega$-Reversible Properties}\label{reduce1}

The results given in the last section can be generalized to a larger class of linear-time properties by using reversible B\"uchi automata. A B\"uchi automaton is an NFA accepting infinite words. Let $\mathcal{A}=(Q,\Sigma,\{\stackrel{A}{\rightarrow}|A\in\Sigma\},Q_0,F)$ be an NFA. We write $\Sigma^{\omega}$ for the set of $\omega-$words over $\Sigma$, i.e. infinite sequences of elements of $\Sigma$. An $\omega-$word $w=A_0A_1A_2...\in\Sigma^{\omega}$ is accepted by B\"uchi automaton $\mathcal{A}$ if there exists an infinite sequence $q_0,q_1,q_2,...$ in $Q$ such that $q_0\in Q_0$, $$q_0\stackrel{A_0}{\rightarrow}q_1\stackrel{A_1}{\rightarrow}q_2\stackrel{A_2}{\rightarrow}...$$ and $q_n\in F$ for infinitely many $n\geq 0$. The language $\mathcal{L}_{\omega}(\mathcal{A})$ accepted by B\"uchi automaton $\mathcal{A}$ is defined to be the set of $\omega-$words accepted by $\mathcal{A}$.

First, Proposition~\ref{half} can be generalized as follows. 

\begin{prop}\label{om-r}Let $P$ be a linear-time property and $\mathcal{A}$ a
co-deterministic finite state automaton such that
$\mathcal{L}_{\omega}(\mathcal{A})=(2^{AP})^{\omega}\setminus P$.
Then $\mathbb{A}\models P$ implies
$$\mathbb{A}\otimes\mathcal{A}\models pers(H\otimes
span\{|q\rangle|q\in Q\setminus F\}).$$
\end{prop}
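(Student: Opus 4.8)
The plan is to run the argument of Proposition~\ref{half} \emph{mutatis mutandis}, replacing the invariant by the persistence property $pers(H\otimes span\{|q\rangle\,|\,q\in Q\setminus F\})$ and the finite-trace/bad-prefix bookkeeping by infinite-trace/B\"uchi-acceptance bookkeeping. The guiding intuition is that $\mathbb{A}\models P$ means every trace of $\mathbb{A}$ lies outside $\mathcal{L}_{\omega}(\mathcal{A})=(2^{AP})^{\omega}\setminus P$, i.e.\ the induced run of $\mathcal{A}$ is non-accepting, i.e.\ it visits $F$ only finitely often --- which is exactly an eventual containment in $Q\setminus F$, hence a persistence statement. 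Since this section assumes $H$ finite-dimensional, the product space $H\otimes H_Q$ and its subspace $\mathbb{I}$ are finite-dimensional, so Lemma~\ref{l-pers} is applicable to $\mathbb{A}\otimes\mathcal{A}$, and one checks (as noted in the text) that $\mathbb{AP}$ satisfies the hypotheses of Lemma~\ref{commut1}.

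First I would reduce to basis states. The vectors $|\psi_i\rangle|q\rangle$ generating $\mathbb{I}$ are orthonormal, so they form a basis of $\mathbb{I}$; by Lemma~\ref{l-pers} it suffices to show that for each such basis state (so $|\psi_i\rangle\in I$ and $q_0\stackrel{L(|\psi_i\rangle)}{\rightarrow}q$ for some $q_0\in Q_0$) and each path $$|\psi_i\rangle|q\rangle\stackrel{V_{\alpha_0}}{\rightarrow}|\xi_1\rangle\stackrel{V_{\alpha_1}}{\rightarrow}|\xi_2\rangle\stackrel{V_{\alpha_2}}{\rightarrow}\cdots$$ in $\mathbb{A}\otimes\mathcal{A}$ there is $m\geq 0$ with $|\xi_n\rangle\models H\otimes span\{|q\rangle\,|\,q\in Q\setminus F\}$ for all $n\geq m$. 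Unwinding Definition~\ref{prod} exactly as in Proposition~\ref{half}, each $|\xi_n\rangle$ has the form $(U_{\alpha_{n-1}}\cdots U_{\alpha_0}|\psi_i\rangle)|p_n\rangle$ for a sequence of automaton states $p_0=q,p_1,p_2,\dots$, and computing $\bigcap_{Y\in L(|\xi_n\rangle)}Y=H\otimes span\{|p_n\rangle\}$ shows that $|\xi_n\rangle\models H\otimes span\{|q\rangle\,|\,q\in Q\setminus F\}$ holds precisely when $p_n\notin F$. So the whole claim reduces to: the sequence $p_0p_1p_2\cdots$ meets $F$ only finitely often. For the core step, the $H$-components form a genuine path $\pi=|\psi_i\rangle(U_{\alpha_0}|\psi_i\rangle)(U_{\alpha_1}U_{\alpha_0}|\psi_i\rangle)\cdots$ of $\mathbb{A}$ with $|\psi_i\rangle\in I$, whence $L(\pi)\in Traces(\mathbb{A})\subseteq P$ by the hypothesis $\mathbb{A}\models P$. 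The transitions prescribed by the $V_{\alpha}$'s, prefixed by $q_0\stackrel{L(|\psi_i\rangle)}{\rightarrow}q$, exhibit $q_0p_0p_1p_2\cdots$ as a run of $\mathcal{A}$ reading $L(\pi)$; if $p_n\in F$ for infinitely many $n$ this run is accepting, giving $L(\pi)\in\mathcal{L}_{\omega}(\mathcal{A})=(2^{AP})^{\omega}\setminus P$ and contradicting $L(\pi)\in P$. Hence $p_n\in F$ for only finitely many $n$, and choosing $m$ past the last such index and invoking Lemma~\ref{l-pers} finishes the proof.

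The step I expect to be the main obstacle is the claim that $q_0p_0p_1p_2\cdots$ genuinely is a run of $\mathcal{A}$. Each step of the product uses the $\delta$-branch of $V_{\alpha}$ only while $succ(p_n,\cdot)\neq\emptyset$; once the underlying run of $\mathcal{A}$ on $L(\pi)$ dies, the $H_Q$-component continues through the auxiliary bijection $\kappa$ of the ``otherwise'' branch, and such $\kappa$-steps are not transitions of $\mathcal{A}$. One must therefore rule out that these fabricated steps contribute infinitely many visits to $F$. Here I would exploit the co-determinism of $\mathcal{A}$ together with the $F$-preferring choice built into the profile $\delta$ (namely $\delta(q,A)\in F$ whenever $succ(q,A)\cap F\neq\emptyset$): these two facts pin down which states can carry an incoming genuine transition on a given symbol, and the aim is to show that the visits of $p_0p_1p_2\cdots$ to $F$ coincide with the visits to $F$ of an honest run of $\mathcal{A}$ on $L(\pi)$. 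Once this correspondence is established, ``non-accepting run'' translates faithfully into ``finitely many $F$-visits of the product,'' and the argument above closes; making this correspondence precise for the $\kappa$-branch is the only genuinely new ingredient beyond Proposition~\ref{half}.
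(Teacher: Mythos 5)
The portion of your argument that you actually carry out coincides with the paper's own proof: reduce via Lemma~\ref{l-pers} (applicable since $H\otimes H_Q$ is finite-dimensional and $\mathbb{AP}$ satisfies the conditions of Lemma~\ref{commut1}) to paths starting at the orthonormal generators $|\psi_i\rangle|q\rangle$ of $\mathbb{I}$; unwind such a path into its $H$-component, which is a genuine path of $\mathbb{A}$, and its $H_Q$-component $p_0p_1p_2\dots$; compute $\bigcap_{Y\in L(|\xi_n\rangle)}Y=H\otimes span\{|p_n\rangle\}$, so that the persistence condition at step $n$ is exactly $p_n\notin F$; and use $Traces(\mathbb{A})\subseteq P=(2^{AP})^{\omega}\setminus\mathcal{L}_{\omega}(\mathcal{A})$ to forbid acceptance. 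The step you single out as the main obstacle --- that $q_0p_0p_1p_2\dots$ is an honest run of $\mathcal{A}$ --- is precisely where your write-up stops, so as it stands your proposal is not a complete proof. You should know, however, that the paper does not close this step either: its proof asserts the chain of transitions in $\mathcal{A}$ (its Eq.~(\ref{eq2})) ``by the definition of $V_\alpha$'s,'' which is valid only while the $\delta$-branch of $V_\alpha$ fires; the $\kappa$-branch is never mentioned. So you have correctly isolated a lacuna that the paper's argument silently shares, rather than omitted a step that the paper supplies.

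Moreover, the repair you envision --- showing that the $F$-visits of $p_0p_1p_2\dots$ coincide with those of an honest run of $\mathcal{A}$ on $L(\pi)$ --- cannot work in general, because once the honest run dies there is nothing left to correspond to, and the $\kappa$-steps are not constrained by $\mathcal{L}_{\omega}(\mathcal{A})$ at all. Concretely: let $H$ be two-dimensional with basis $\{|0\rangle,|1\rangle\}$, $AP=\{X\}$ with $X=span\{|1\rangle\}$ (so $L(|0\rangle)=\emptyset$); let $\mathbb{A}$ have $Act=\{a\}$, $U_a=I_H$, $I=span\{|0\rangle\}$; let $\mathcal{A}$ have $Q=\{q_0,q_1,q_f\}$, $Q_0=\{q_0\}$, $F=\{q_f\}$ and the single transition $q_0\stackrel{\emptyset}{\rightarrow}q_1$. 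Then $\mathcal{A}$ is co-deterministic, $Q_0\cap F=\emptyset$, and $\mathcal{L}_{\omega}(\mathcal{A})=\emptyset$ (it has no infinite runs), so $P=(2^{AP})^{\omega}$ and $\mathbb{A}\models P$ trivially. In the product, $\mathbb{I}=span\{|0\rangle|q_1\rangle\}$, and since $succ(q_1,\emptyset)=succ(q_f,\emptyset)=\emptyset$ the profile must pick a bijection $\kappa:\{q_1,q_f\}\rightarrow\{q_0,q_f\}$. Choosing $\kappa(q_1)=q_f$, $\kappa(q_f)=q_0$ produces the product path $|0\rangle|q_1\rangle,\ |0\rangle|q_f\rangle,\ |0\rangle|q_0\rangle,\ |0\rangle|q_1\rangle,\dots$, which enters $H\otimes span\{|q_f\rangle\}$ infinitely often, so $\mathbb{A}\otimes\mathcal{A}\not\models pers(H\otimes span\{|q\rangle|q\in Q\setminus F\})$ despite $\mathbb{A}\models P$. (The other choice, $\kappa(q_1)=q_0$, $\kappa(q_f)=q_f$, repairs this instance.) Thus the statement is sensitive to the choice of profile, and neither co-determinism nor the $F$-preferring choice of $\delta$ that you invoke addresses this; any complete proof, yours or the paper's, must additionally constrain the $\kappa$-branch (for instance, require $\kappa$ to avoid mapping non-final states into $F$ whenever possible and prove that this suffices along reachable states). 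Your instinct about where the difficulty lies is exactly right; the shared argument does not yet constitute a proof.
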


\begin{proof} By Lemma~\ref{l-pers}, it suffices to show that for
any path
$$|\psi_i\rangle|q\rangle\stackrel{V_{\alpha_1}}{\rightarrow}|\zeta_1\rangle\stackrel{V_{\alpha_2}}{\rightarrow}|\zeta_2\rangle
\stackrel{V_{\alpha_3}}{\rightarrow}...$$ where $|\psi_i\rangle$ is
a basis state of $I$,
$$q_0\stackrel{L(|\psi_i\rangle)}{\rightarrow}q$$ and $q_0\in Q_0$,
there exists $m\geq 0$ such that $$|\zeta_n\rangle\models H\otimes
span\{|q\rangle|q\in Q\setminus F\}.$$ We write: 
$$|\varphi_n\rangle=U_{\alpha_n}...U_{\alpha_1}|\psi_i\rangle$$ for
all $n\geq 1$. By the definition of $V_\alpha$'s we have
$|\zeta_n\rangle=|\varphi_n\rangle|q_n\rangle$ for all $n\geq 1$,
and
\begin{equation}\label{eq1}|\psi_i\rangle\stackrel{U_{\alpha_1}}{\rightarrow}|\varphi_1\rangle\stackrel{U_{\alpha_2}}{\rightarrow}|\varphi_2\rangle
\stackrel{U_{\alpha_3}}{\rightarrow}...\end{equation}
\begin{equation}\label{eq2}q\stackrel{L(|\psi_i\rangle)}{\rightarrow}q_1\stackrel{L(|\varphi_1\rangle)}{\rightarrow}
q_2\stackrel{L(|\varphi_2\rangle)}{\rightarrow}q_3\stackrel{L(|\varphi_3\rangle)}{\rightarrow}...
\end{equation} Therefore, it follows from Eq.~(\ref{eq1}) that
$$\sigma=L(|\psi_i\rangle)L(|\varphi_1\rangle)L(|\varphi_2\rangle)...\in
Traces(\mathbb{A})\subseteq P$$ and $$\sigma\notin
(2^{AP})^{\omega}\setminus P=\mathcal{L}_\omega(\mathcal{A}).$$ This
together with Eq.~(\ref{eq2}) implies that there is $m\geq 0$ such
that for $n\geq m$, we have $q_n\in Q\setminus F$, i.e.
$$|\zeta_n\rangle=|\varphi_n\rangle|q_n\rangle\in H\otimes
span\{|q\rangle|q\in Q\setminus F\}.$$\end{proof}

As in the case of safety properties, the inverse of the above proposition requires that the B\"uchi automaton accepting property $P$ is reversible. So, we have the following generalization of Theorem~\ref{main1}.

\begin{thm}\label{main2}If $P$ is a linear-time property and
$\mathcal{A}$ a reversible automaton with
$\mathcal{L}_\omega(\mathcal{A})=(2^{AP})^{\omega}\setminus P$, then
$\mathbb{A}\models P$ if and only if
\begin{equation}\label{prodx3}\mathbb{A}\otimes\mathcal{A}\models pers(H\otimes
span\{|q\rangle|q\in Q\setminus F\}).\end{equation}\end{thm}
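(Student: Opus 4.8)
The plan is to obtain the two implications separately, reusing earlier machinery for the easy one and mirroring the proof of Theorem~\ref{main1} for the hard one, with the invariant condition replaced by the persistence condition. The ``only if'' direction, that $\mathbb{A}\models P$ forces $\mathbb{A}\otimes\mathcal{A}\models pers(H\otimes span\{|q\rangle\mid q\in Q\setminus F\})$, is precisely Proposition~\ref{om-r}, so nothing further is required there. All the work lies in the converse, which I would prove by contraposition: assuming $\mathbb{A}\not\models P$, I would exhibit a path of $\mathbb{A}\otimes\mathcal{A}$ issuing from $\mathbb{I}$ whose $H_Q$-component lies in $F$ infinitely often, and then invoke the computation already performed in the proof of Theorem~\ref{main1}, namely that $\bigcap_{Y\in L(|\varphi\rangle|q\rangle)}Y=H\otimes span\{|q\rangle\}$, which is contained in $H\otimes span\{|q\rangle\mid q\in Q\setminus F\}$ exactly when $q\notin F$. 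If the $F$-component recurs infinitely often, no tail of the path satisfies the persistence condition, so $\mathbb{A}\otimes\mathcal{A}\not\models pers X$.

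Concretely, from $\mathbb{A}\not\models P$ I would first select a path $\pi=|\varphi_0\rangle|\varphi_1\rangle|\varphi_2\rangle\cdots$ in $\mathbb{A}$ with $|\varphi_0\rangle\in I$ and $L(\pi)\notin P$. Since $\mathcal{L}_\omega(\mathcal{A})=(2^{AP})^{\omega}\setminus P$, the trace $L(\pi)$ is accepted by the B\"uchi automaton $\mathcal{A}$, so there is a run
$$q_{-1}\stackrel{L(|\varphi_0\rangle)}{\rightarrow}q_0\stackrel{L(|\varphi_1\rangle)}{\rightarrow}q_1\stackrel{L(|\varphi_2\rangle)}{\rightarrow}\cdots$$
with $q_{-1}\in Q_0$ and $q_n\in F$ for infinitely many $n$. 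The decisive step is then to lift this run into the product automaton. Because $\mathcal{A}$ is reversible and hence deterministic, each $q_{n+1}$ is the unique successor of $q_n$ and therefore coincides with the value $\delta(q_n,\cdot)$ selected in Definition~\ref{prod}; consequently the transitions hard-wired into the operators $V_\alpha$ reproduce exactly this run, and, writing $U_{\alpha_{n+1}}$ for the action realizing $|\varphi_{n+1}\rangle=U_{\alpha_{n+1}}|\varphi_n\rangle$, I would read off the path
$$|\varphi_0\rangle|q_0\rangle\stackrel{V_{\alpha_1}}{\rightarrow}|\varphi_1\rangle|q_1\rangle\stackrel{V_{\alpha_2}}{\rightarrow}|\varphi_2\rangle|q_2\rangle\stackrel{V_{\alpha_3}}{\rightarrow}\cdots$$
of $\mathbb{A}\otimes\mathcal{A}$ with $|\varphi_0\rangle|q_0\rangle\in\mathbb{I}$. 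As $q_n\in F$ infinitely often, the states $|\varphi_n\rangle|q_n\rangle$ fail the invariant condition infinitely often, contradicting persistence and completing the contrapositive.

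I expect the lifting to be the principal obstacle, for two distinct reasons. The first is structural and is shared with Theorem~\ref{main1}: one must check that the accepting run genuinely lies inside the partial function $\delta$ fixed in Definition~\ref{prod}, which is exactly where \emph{reversibility} is indispensable---co-determinism alone only guaranteed that each $V_\alpha$ is unitary, whereas it is forward determinism that forces the $\mathcal{A}$-run to agree with the built-in transitions. The second is specific to the $\omega$-setting: unlike the safety case, where a single visit to $F$ suffices to break an invariant, here I must track the pattern ``$q_n\in F$ for infinitely many $n$'' and match it against the negation of $pers X$, namely the non-existence of any $m$ with $|\varphi_n\rangle|q_n\rangle\models X$ for all $n\geq m$.

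A delicate point I would handle explicitly is the membership $|\varphi_0\rangle|q_0\rangle\in\mathbb{I}$ together with the fact that the displayed path stays a sequence of product states. When $|\varphi_0\rangle$ is a proper superposition $\sum_i a_i|\psi_i\rangle$ of basis states, the operators $V_\alpha$ act on each $|\psi_i\rangle|q\rangle$ through the label $L(U_\alpha|\psi_i\rangle)$, which may differ across $i$, so factorization is not automatic. The clean situation is when $\pi$ issues from a single basis state $|\psi_i\rangle\in I$, where---as in the proof of Proposition~\ref{om-r}---the product path factors as the $\mathbb{A}$-path tensored with the $\mathcal{A}$-run. I would therefore reduce to that case, using the commutativity of $\mathbb{AP}$ and the basis-state criterion of Lemma~\ref{l-pers} (applicable since $H$, and hence $\mathbb{I}$, is finite-dimensional) to pass between behavior on all initial states and behavior on a basis of the initial space.
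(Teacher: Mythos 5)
Your proposal reproduces the paper's proof in all essentials: the \textquotedblleft only if" direction is delegated to Proposition~\ref{om-r} exactly as in the paper, and the \textquotedblleft if" direction is the same refutation argument --- extract a path of $\mathbb{A}$ whose trace lies outside $P$, take the accepting B\"uchi run of $\mathcal{A}$ on that trace, use reversibility (forward determinism) to identify the run with the partial function $\delta$ wired into the operators $V_\alpha$, lift it to a path of $\mathbb{A}\otimes\mathcal{A}$ starting in $\mathbb{I}$, and observe that infinitely many visits to states of the form $|\varphi_j\rangle|q_j\rangle$ with $q_j\in F$ contradict $pers(H\otimes span\{|q\rangle\mid q\in Q\setminus F\})$. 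The \textquotedblleft delicate point" you raise about superposition initial states is a genuine subtlety, but the paper's own proof does not address it at all (it lifts the path from an arbitrary $|\varphi_0\rangle\in I$ and asserts $|\varphi_0\rangle|q_0\rangle\in\mathbb{I}$ without comment), so your treatment is, if anything, more careful than the original.
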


\begin{proof} The \textquotedblleft only if" part is exactly
Proposition~\ref{om-r}. For the \textquotedblleft if" part, assume
that Eq.~(\ref{prodx3}) is correct. We aim at proving
$\mathbb{A}\models P$ by refutation. If $\mathbb{A}\not\models P$,
then there exists a path
$$|\varphi_0\rangle\stackrel{U_{\alpha_0}}{\rightarrow}|\varphi_1\rangle\stackrel{U_{\alpha_1}}{\rightarrow}
|\varphi_2\rangle\stackrel{U_{\alpha_2}}{\rightarrow}...$$ in
$\mathbb{A}$ such that $|\varphi_0\rangle\in I$ and
$$L(|\varphi_0\rangle)L(|\varphi_1\rangle)L(|\varphi_2\rangle)...\in
(2^{AP})^{\omega}\setminus P=\mathcal{L}_\omega(\mathcal{A}).$$
Consequently, we have a path
$$q_{-1}\stackrel{L(|\varphi_0\rangle}{\rightarrow}q_0\stackrel{L(|\varphi_1\rangle}{\rightarrow}q_1
\stackrel{L(|\varphi_2\rangle}{\rightarrow}...$$ in $\mathcal{A}$
such that $q_{-1}\in Q_0$ and $q_j\in F$ for infinitely many $i$.
The assumption that $\mathcal{A}$ is reversible implies that
$$q_{j+1}=\delta(q_j,L(|\varphi_{j+1}\rangle))$$ for all $j\geq -1$.
Thus, by Definition~\ref{prod} we obtain a path
$$|\varphi_0\rangle|q_0\rangle\stackrel{V_{\alpha_0}}{\rightarrow}
|\varphi_1\rangle|q_1\rangle\stackrel{V_{\alpha_1}}{\rightarrow}
|\varphi_2\rangle|q_2\rangle\stackrel{V_{\alpha_2}}{\rightarrow}...
$$ in $\mathbb{A}\otimes\mathcal{A}$ with
$|\varphi_0\rangle|q_0\rangle\in\mathbb{I}$, but
\begin{equation*}\begin{split}L(|\varphi_0\rangle|q_0\rangle)L(|\varphi_1\rangle|q_1\rangle)
&L(|\varphi_2\rangle|q_2\rangle)...\\ &\notin pers(H\otimes
span\{|q\rangle|q\in Q\setminus F\}\end{split}\end{equation*} since
$$|\varphi_j\rangle|q_j\rangle\not\models H\otimes
span\{|q\rangle|q\in Q\setminus F\}$$ for infinitely many $j$. This
is a contradiction.\end{proof}

By the above theorem, we are able to reduce the problem of checking an $\omega-$reversible property of the quantum automaton $\mathbb{A}$ to checking a persistence property of quantum automaton $\mathbb{A}\otimes\mathcal{A}$, which can be further reduced to checking an invariant by using Lemma~\ref{inv-pers}. Therefore, the problem of checking $\omega-$reversible properties of quantum systems can be eventually solved by employing the algorithm presented in Sec.~\ref{algo}.

\section{Conclusion}\label{concl}

This paper aims at developing effective techniques for model-checking linear-time properties of quantum systems. It can be seen as one of the first steps toward to a theoretical foundation for (classical) computer-aided verification of engineered quantum systems.   
The main contribution of the paper includes:\begin{itemize}\item We define a mathematical framework in which we can
examine various linear-time properties of quantum systems, such as safety and liveness properties. 
\item We present an algorithm for checking invariants of quantum systems. 
\item We show that both checking a safety
property of a (closed) quantum system recognizable by a reversible automaton and checking a linear-time property of a (closed) quantum system recognizable by a reversible B\"uchi automaton 
can be done by verifying an invariant of a larger system.\end{itemize}

The physical implication of the automata-based approach to model-checking a quantum system is very interesting. There are two systems involved in this approach. One of them is the quantum system $\mathbb{A}$ to be checked. It can be called the object system, and we assume that its state space is $H$. The other system is a classical system whose behavior is described by an automaton $\mathcal{A}$. We call it the probe system. The object system and the probe system then interact to form the system $\mathbb{A}\otimes\mathcal{A}$. The automaton-based approach allows us to check a property of the object system by means of checking an invariant of $\mathbb{A}\otimes\mathcal{A}$. Note that the invariant condition needed to be checked is of the form $H\otimes X$, where $X$ is a subspace of the state space of the probe system (see Theorems~\ref{main1} and~\ref{main2}). So, only the probe system will be examined in checking such an invariant. Obviously, the idea of automata-based model-checking coincides with that of indirect quantum measurements (see for example~\cite{BP02}, Sec. 2.4.6). This interesting physical meaning of automata-based approach have been overlooked in the classical case. In the quantum case, it is even more interesting to notice that the probe system is a classical system, and thus the problem of checking a quantum system is reduced to checking a classical system.  

As is well-known, the most serious disadvantage of model-checking is
the state explosion problem. This problem should not be very serious
in the early time of applying model-checking techniques to quantum
engineering. As one can imagine, the size of quantum engineering
systems that will be implemented in the near future cannot be very
large. On the other hand, the errors in the design of these systems
will not be caused mainly by their large sizes that the designers
are unable to manage. Instead, they may be caused by the anti-human
intuition features of the quantum world that the designers cannot
properly understand. So, we believe that model-checking techniques
based on a solid mathematical model of quantum systems will be vital
in guaranteeing correctness and safety of quantum engineering
systems.

The results achieved in this paper are only a very small step toward to the general purpose of model-checking quantum systems, and a lot of important problems are still unsolved. Here, we would like to mention a few open problems for further studies: \begin{itemize}\item \textit{Non-probabilistic vs probabilistic (atomic) propositions:}  Only non-probabistic atomic propositions are considered in this paper, following the basic idea of Birkhoff-von Neumann quantum logic~\cite{BvN36}. However, quantum mechanics is essentially a statistical theory based on quantum measurements. So, more sophisticated model-checking techniques for quantum systems should be able to encompass probabilistic information through incorporating checking with the theory of quantum measurements.
\item \textit{Closed vs open quantum systems:} In this paper, quantum systems are modeled by quantum automata whose behaviors are described by unitary operators. According a basic postulate of quantum mechanics, 
unitary operators are suited to depict the dynamics of closed quantum systems. A more suitable mathematical formalism for evolution of open quantum systems that interact with the environment 
is given in terms of super-operators~\cite{NC00} (see chapter 8). So, an interesting topic for further studies is to extend the model-checking technique developed in this paper so that it can be applied to quantum systems modeled by quantum automata with super-operators as their description of transitions.    
\item \textit{Linear-time vs branching-time:}
The algorithms presented in this paper can only check linear-time properties of quantum systems. One may naturally expect to develop model-checking techniques for quantum systems that can verify branching-time properties. The first step toward such an objective would be to define a logic that can properly specify branching-time behaviors of quantum systems. A quantum extension of computation tree logic was already proposed by Baltazar, Chadha, Mateus and Sernadas~\cite{BCMS07}, \cite{BCM08}. It seems that more research in this direction is in order because the branching notion of time for quantum systems is highly related to some foundational problems of quantum mechanics, e.g. trajectories~\cite{Br02}, decoherent (or consistent) histories~\cite{Gr96}, that are still not well-understood even in the physicists community.    
\item \textit{Classical vs quantum algorithms:} The algorithms for model-checking quantum systems developed in this
paper are classical. As the progress of quantum engineering, more
and more complicated quantum systems will be produced, and classical
algorithms might be too slow for checking their correctness and
safety. But the development of quantum engineering might make that
large-scalable and functional quantum computers be eventually built,
and quantum computer will be widely used in quantum engineering just
as today's computers are used in today's engineering. An interesting
open problem would be to design quantum algorithms for
model-checking quantum systems (as well as classical systems).\end{itemize}

\end{document}